\newcommand{\X}{{\scriptscriptstyle\mathcal{X}}}
\DeclareSymbolFont{matha}{OML}{txmi}{m}{it}
\DeclareMathSymbol{\varv}{\mathord}{matha}{118}
\newcommand{\trans}{^{\mathsf{T}}}
\newcommand{\herm}{^{\H}}
\begin{document}
	\title{Max-Min SINR Analysis of STAR-RIS Assisted Massive MIMO Systems with Hardware Impairments} 
	\author{Anastasios Papazafeiropoulos,  Pandelis Kourtessis, Symeon Chatzinotas \thanks{A. Papazafeiropoulos is with the Communications and Intelligent Systems Research Group, University of Hertfordshire, Hatfield AL10 9AB, U. K., and with SnT at the University of Luxembourg, Luxembourg.  P. Kourtessis is with the Communications and Intelligent Systems Research Group, University of Hertfordshire, Hatfield AL10 9AB, U. K. S. Chatzinotas is with the SnT at the University of Luxembourg, Luxembourg. A. Papazafeiropoulos was supported  by the University of Hertfordshire's 5-year Vice Chancellor's Research Fellowship. 		S. Chatzinotas   is supported by the National Research Fund, Luxembourg, under the project RISOTTI. E-mails: tapapazaf@gmail.com,  p.kourtessis@herts.ac.uk, symeon.chatzinotas@uni.lu.}}
	\maketitle\vspace{-1.7cm}
	\begin{abstract}
		Reconfigurable intelligent surface  (RIS) has emerged as a cost-effective solution to improve wireless communication performance through just passive reflection. Recently, the concept of simultaneously transmitting and reflecting RIS (STAR-RIS) has appeared but the study of minimum signal-to-interference-plus-noise ratio (SINR) and the impact of hardware impairments (HWIs) remain  open. In addition to previous works on STAR-RIS, we consider a massive multiple-input multiple-output (mMIMO) base station (BS) serving multiple user equipments (UEs) at both sides of the RIS. Specifically, in this work, focusing on the downlink of a single cell,  we derive the minimum SINR obtained by the optimal linear precoder (OLP) with HWIs in closed form. The OLP maximises the minimum SINR subject to a given power constraint for any given passive beamforming matrix (PBM). Next, we obtain deterministic equivalents (DEs) for the OLP and the minimum SINR, which are then used to optimise the PBM.  Notably, based on the DEs and statistical channel state information (CSI), we optimise simultaneously the amplitude and phase shift  by using a projected gradient ascent algorithm (PGAM) for both energy splitting (ES) and mode switching (MS) STAR-RIS operation protocols with reduced feedback, \textcolor{black}{which is quite crucial for STAR-RIS systems that include the double number or variables compared to reflecting only RIS.} Simulations verify the analytical results, shed light on the impact of HWIs, and demonstrate the better performance of STAR-RIS compared to conventional RIS. \textcolor{black}{Also, a benchmark full instantaneous CSI  (I-CSI) based design is provided and shown to result in higher SINR but lower net achievable sum-rate than the statistical CSI based design because of large overhead associated with the acquisition of full I-CSI acquisition. Thus, not only do we evaluate the impact of HWIs but we also propose a statistical CSI based design that provides higher net sum-rate with low overhead and complexity.}
	\end{abstract}
	\begin{keywords}
		Simultaneously transmitting and reflecting RIS, correlated Rayleigh fading, hardware impairments,  6G networks.
	\end{keywords}
	
	\section{Introduction}
	Reconfigurable intelligent surface (RIS) has appeared recently as an innovative technology that materialises a smart and reconfigurable radio environment through passive signal reflection by exploiting the advances in digitally-controlled metasurfaces \cite{Wu2019,Basar2019}. The structure of RIS includes a large number of nearly passive elements that reflect impinging waves independently by adjusting the amplitude and phase shift in real time. A RIS does not just adapt to the random and time-varying characteristics of the wireless channel by using traditional techniques but it can  tune its reflecting elements towards shaping the wireless propagation channel for enhanced signal transmission. Also,  the RIS  technology is accompanied by low hardware cost and power consumption demands because the signal reflection is passive and does  not consist of  any radio frequency (RF) chains while its lightweight structure promotes its flexible deployment.
	
	Due to these significant advantages, RIS has been investigated in various wireless scenarios. In particular, in \cite{Wu2019}, the joint active and passive beamforming for both single UE and multiple  UEs RIS-assisted multiple-input single-output (MISO) systems were first formulated while minimizing the total transmit power at the base station (BS). Pan et al.  studied the weighted sum-rate maximisation problems in multi-cell multiple-input multiple-output (MIMO) communications \cite{Pan2020},  the artificial noise added secure MIMO communications  \cite{hong2020artificial}, etc., which all demonstrate the significant performance gains  due to a RIS deployment. However, all these works rely on instantaneous channel state information   (I-CSI ) \cite{Wu2019,Pan2020,hong2020artificial}. 
	
	Generally, there are two approaches used for phase shift optimization namely I-CSI and statistical CSI (S-CSI) \cite{Zhao2020,Kammoun2020,Papazafeiropoulos2021,VanChien2021,Papazafeiropoulos2021a,Chen2021,You2021,Zhi2022,Zhang2022}. Contrary to the former approach, depending on small-scale fading and varying at each coherence interval, the latter depends on large-scale statistics in terms of path-losses and correlation matrices that vary at every several coherence intervals, and thus, come with reduced overhead. Note that the overhead amount is crucial because it increases with the number of RIS elements. Especially, in high mobility setups, the acquisition of I-CSI can be very challenging since the RIS has to be optimised very frequently. Hence, the choice of the S-CSI approach appears to be more beneficial in realistic scenarios.
	
	In this direction, the appealing characteristics of RIS have spurred a growing interest in both academia and industry \cite{Basar2019,Wu2019,Pan2020,Bjoernson2019b,Kammoun2020, Elbir2020,Guo2020,Chen2019,Papazafeiropoulos2022b,Papazafeiropoulos2022a}, while different setups  have been studied such as  multi-antenna \cite{Zhang2020a} and multi-cell communications \cite{Pan2020}, double  RIS \cite{Zheng2021,Papazafeiropoulos2022c} or multiple RIS networks \cite{Mei2022,Papazafeiropoulos2021a}. However, the majority of existing works have relied on the scenario where only reflection occurs, i.e., both the transmitter and the receiver are located on the same side of the surface, while, in practice, user equipments (UEs) can be found on both sides of the surface, i.e., in front and behind the RIS. Fortunately, further advancements in programmable metamaterials have resulted in the achievement of simultaneously transmitting and reflecting RIS (STAR-RIS) that satisfy these application requirements \cite{Xu2021,Mu2021,Niu2021,Wu2021,Niu2022}. In fact, STAR-RIS	has been proposed to provide full space $ (360^{\circ}) $ coverage by tuning both the amplitudes and phase shifts of the impinging waves. For example, in \cite{Xu2021}, a general hardware model and two-channel models that correspond to the near-field and far-field regions of STAR-RIS have been provided in the case of only two UEs. Notably,  therein, it has been shown that the coverage and the diversity are  better than conventional RIS-assisted systems with reflecting only elements. Moreover, in \cite{Mu2021}, three operation protocols were suggested to adjust the transmission reflection coefficients of the transmission signals. Namely, these protocols are energy splitting (ES), mode switching (MS), and time switching (TS). \textcolor{black}{However, most of these works rely on I-CSI, which is an unattractive approach for STAR-RIS systems  that include a double number of variables with respect to reflecting-only RIS.} Also, there is no work that has  obtained the optimal linear precoder (OLP) and the minimum SINR and their deterministic equivalent (DE) expressions for STAR-RIS.

	In the meanwhile, although independent Rayleigh fading is an unrealistic assumption not only many initial works but also ongoing studies on conventional RIS have not taken RIS correlation into account e.g.,  \cite{Basar2019,Wu2019,Pan2020}, while in \cite{Bjoernson2020}, it was shown that the spatial correlation has to be considered. To address this issue, recently, several works on RIS  have taken correlation into consideration such as \cite{Kammoun2020,Nadeem2020,Papazafeiropoulos2021}, but only two works have studied its effect on STAR-RIS assisted systems \cite{Mu2021,Niu2021}. In parallel, regarding STAR-RIS, most works have assumed a single-antenna transmitter.

	On this ground, it is worthwhile to mention that a  part of prior literature on RIS-assisted systems \cite{Li2020,Xing2020,Qian2020,Liu2020,Shen2020,Zhou2020a} has considered the  inevitable impact of transceiver hardware impairments (T-HWIs) \cite{Papazafeiropoulos2017a} such as the quantization noise in the analog-to-digital converters (ADCs) and the  in-phase/quadrature-phase (I/Q)-imbalance~\cite{Qi2010}, which remain after the application of compensation/mitigation algorithms \cite{Schenk2008,Bjoernson2017}. Moreover,  HWI at the RIS (RIS-HWI), known as phase noise and appearing due to finite precision configuration of the phase shifts, has been investigated in \cite{Badiu2019,Xing2020,Qian2020}. Notably, in \cite{Papazafeiropoulos2021} and \cite{Papazafeiropoulos2021b}, the impact of T-HWIs and RIS-HWI on the sum achievable rate and minimum rate have been studied, respectively. However, there is no work on STAR-RIS that has studied the impact of HWIs.
	\subsection{Contribution}	 
	\textcolor{black}{	The previous observations motivate the topic of this work, which is the design of realistic STAR-RIS-assisted systems with T-HWIs and RIS-HWI under correlated Rayleigh fading conditions. \textcolor{black}{Notably, the introduction of STAR-RIS imposes new challenges. In particular, the first constraint is not simple but includes the two types of passive beamforming, namely transmission and reflection beamforming, to be optimized, which are coupled with each other due to the energy conservation law.} Also, the study of HWIs increases the complexity and the difficulty during the mathematical manipulations in the following analysis of STAR-RIS assisted systems. The main contributions are summarised as follows:}
	\begin{itemize}
		\item \textcolor{black}{We aim at characterising the potentials of STAR-RIS assisted systems with multiple UEs at the same time-frequency response under the realistic conditions of correlated fading and HWIs. Under this scenario, we derive the minimum SINR and obtain the  OLP with HWIs in closed forms.  Next, we obtain the DE expressions of the OLP and minimum SINR at the large system limit in terms of large-scale statistics.}
		
		\item\textcolor{black}{ Contrary to \cite{Kammoun2020}, we have assumed a STAR-RIS system and we have included it in the design of the HWIs. Also, compared to other works on STAR-RIS, this is the only word studying not only the minimum SINR but also  the effects of HWIs. Moreover, we have provided an analytical framework to present in an elegant unified way the analysis of both types of UEs located in front of and the behind the surface. We have considered multiples UEs at each side of the surface, while most works on STAR-RIS have assumed only one UE at each side. In addition, note that many previous works have assumed that the channel between the BS and the RIS is deterministic expressing a line-of-sight (LoS) component, e.g., \cite{Kammoun2020}, while the analysis here is more general since we assume that all links are correlated Rayleigh fading distributed.}
		
		\item  \textcolor{black}{We optimise the minimum SINR with respect to both the amplitude and phase shifts. Specifically, despite its non-convexity, we obtain an iterative efficient method that enables their simultaneous at each iteration.   Note that other works have optimised the amplitude and the phase shifts in  an alternating way. \textcolor{black}{Our proposed optimization  is quite  advantageous for STAR-RIS implementations, which require double optimization variables compared to reflecting only RIS.}
			\item 	Simulations are depicted against the analytical results for verification. We show the outperformance  over conventional RIS, and we investigate the impact of various parameters such as the HWIs. For instance, T-HWIs, being power dependent, have a more severe impact at high transmit power. \textcolor{black}{Also, we consider as a performance benchmark a full  I-CSI based RIS design, and we depict the superiority of the proposed framework based on S-CSI. Furthermore, we compare the gradient ascent with a genetic algorithm (GA), and we show that the former performs better.}}
	\end{itemize}
	
	\subsection{Paper Outline} 
	The remainder of this paper is organized as follows. Section~\ref{System} presents the system model of a STAR RIS-assisted mMIMO system with correlated Rayleigh fading.  Section~\ref{PerformanceAnalysis} presents the downlink minimum rate with  correlated fading  and HWIs. Section~\ref{optimization} provides the optimization by deriving the OLP and passive beamforming matrix (PBM) of the STAR-RIS after having obtained the DE of the minimum rate. 
	The numerical results are provided in Section~\ref{Numerical}, and Section~\ref{Conclusion} concludes the paper.
	
	\subsection{Notation}Vectors and matrices are denoted by boldface lower and upper case symbols, respectively. The notations $(\cdot)^\T$, $(\cdot)^\H$, and $\tr\!\left( {\cdot} \right)$ describe the transpose, Hermitian transpose, and trace operators, respectively. The expectation operator is described by $\EE\left[\cdot\right]$ while $ \diag\left(\ba \right) $ describes an $ n\times n $ diagonal matrix with diagonal elements being the elements of vector $ \ba $. In the case of a matrix $ \bA $, $\diag\left(\bA\right) $ represents a diagonal matrix with elements corresponding to the diagonal elements of $ \bA $. Also, $ \arg\left(\cdot\right) $  denotes the argument function. Finally, $\bb \sim \cC\cN{(\b0,\mathbf{\Sigma})}$ expresses a circularly symmetric complex Gaussian vector with {zero mean} and covariance matrix $\mathbf{\Sigma}$.
	\section{System Model}\label{System}
	As shown in Fig~\ref{Fig1}, we consider a STAR-RIS-aided system, where a BS deployed with an $ M $-element uniform linear array (ULA) serves simultaneously a set of $ \mathcal{K}=\{1, \ldots, K\} $ single-antenna UEs that can be located in any of both sides of the STAR-RIS. In particular, a set of $ \mathcal{K_{\mathrm{t}}}=\{1, \ldots, K_{\mathrm{t}}\} $ UEs are found in the transmission region $ (t) $ and a set of $ \mathcal{K_{\mathrm{r}}}=\{1, \ldots, K_{\mathrm{r}}\} $ UEs are found in the reflection region (\textcolor{black}{$ r $}), respectively. In other words, we have $ K_{\mathrm{t}}+K_{\mathrm{r}} =K $ UEs. \textcolor{black}{We can define the RIS operation mode for each of the $ $ K $ $ UEs in a unified way elegant way by defining the set $ \mathcal{W}=\{w_{1}, \ldots, w_{K}\} $. If $ w_{k}=t $, it means that the $ k $th UE is found behind the STAR-RIS, i.e., it is located in the transmission region, and we have $ k \in  \mathcal{K_{\mathrm{t}}} $. Otherwise, if  $ w_{k}=r $, the $ k $th UE is found at the same side of the RIS with the BS, i.e., it is located in the reflection region, which means $ k \in  \mathcal{K_{\mathrm{r}}} $.} Also, we assume that blockages do not allow the presence of direct links between the BS and the UEs, which justifies the application of a RIS. Of course, the following analysis can be extended to include any direct links. Similarly, signals that are impinging on the RIS two or more times are ignored due to negligible power.
	
	\begin{figure}[!h]
		\begin{center}
			\includegraphics[width=0.8\linewidth]{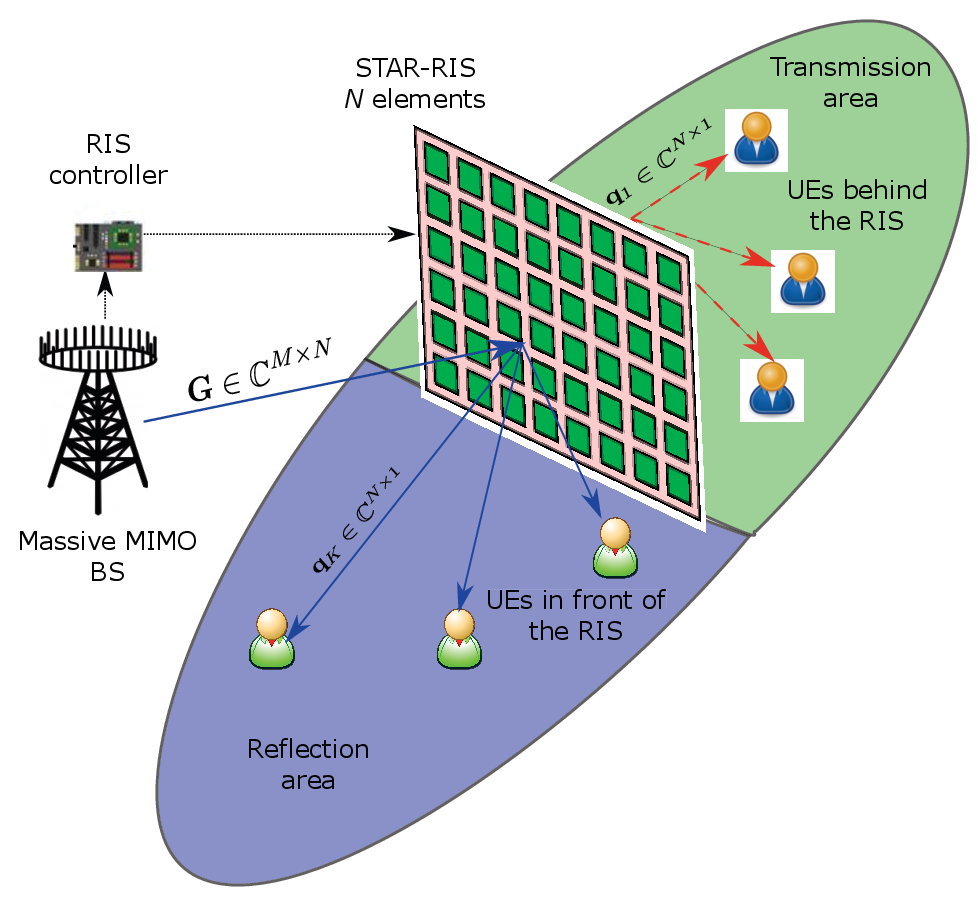}
			\caption{\footnotesize{ A mMIMO STAR-RIS assisted system with multiple UEs at transmission and reflection regions.  }}
			\label{Fig1}
		\end{center}
	\end{figure} 
	
	Regarding the STAR-RIS, which is at the center of our analysis, we assume that it is deployed by a uniform planar array (UPA), consisting of a total of $ N=N_{\mathrm{h}}\times N_{\mathrm{v}} $ elements \textcolor{black}{belonging to the set $ \mathcal{N}=\{1, \ldots, N\}  $}, where $ N_{\mathrm{h}} $ and  $ N_{\mathrm{v}} $ are the numbers of horizontally and  vertically passive elements, respectively. According to its principle, the STAR-RIS is able to configure the transmitted $ (t) $ and reflected $ (r) $ signals by using two independent coefficients. Specifically, we denote $ t_{n} =( {\beta_{n}^{t}}e^{j \phi_{n}^{t}})s_{n}$ and $ r_{n}=( {\beta_{n}^{r}}e^{j \phi_{n}^{r}})s_{n} $  the transmitted 	and reflected signal by the $ n $th STAR-RIS element, respectively, \textcolor{black}{where $ s_{n} $ is the impinging signal}. Herein, the amplitude and phase parameters  $ {\beta_{n}^{w_{k}}}\in [0,1] $ and $ \phi_{n}^{w_{k}} \in [0,2\pi)$ are independent, and the $ k $th UE can be in any of the two regions of the RIS \cite{Xu2021}. Based on this model, the choice of the amplitudes is based on the relationship expressed by the law of energy conservation as
	\begin{align}
		(\beta_{n}^{t})^{2}+(\beta_{n}^{r})^{2}=1,  \forall n \in \mathcal{N},
	\end{align}
	but the phases  $ \phi_{n}^{t} $ and $ \phi_{n}^{r} $ can be chosen independently.
	%

	\subsection{STAR-RIS Protocols}
	STAR-RIS has been presented mainly by means of its ES/MS protocols  \cite{Mu2021}. Below, we present them briefly in terms  of their main points.
	\subsubsection{ES protocol} All STAR-RIS elements serve simultaneously all UEs, which are located  in both   $ t $ and $ r $ regions. In this case, the  PBM for the $ k
	$th  UE is described as $ \bPhi_{w_{k}}^{\mathrm{ES}}=\diag( {\beta_{1}^{w_{k}}}\theta_{1}^{w_{k}}, \ldots,  {\beta_{N}^{w_{k}}}\theta_{N}^{w_{k}}) \in \mathbb{C}^{N\times N}$, where $ \beta_{n}^{w_{k}} \ge 0 $, $ 		(\beta_{n}^{t})^{2}+(\beta_{n}^{r})^{2}=1 $, and $ |\theta_{i}^{w_{k}}|=1, \forall n =1, \dotsb, N $.
	
	\subsubsection{MS protocol} The STAR-RIS is divided into two groups of $ N_{t} $ and $ N_{r} $ elements that serve UEs in the  $  t$ and $ r $ regions, respectively. Obviously, we have  $ N_{t}+N_{r}=N $. The PBM for UE $  k \in \mathcal{K}_{t} $ or $  k \in \mathcal{K}_{r} $
	is given by $ \bPhi_{w_{k}}^{\mathrm{MS}}=\diag( {\beta_{1}^{w_{k}}} \theta_{1}^{w_{k}}, \ldots,  {\beta_{N}^{w_k}}\theta_{N}^{w_{k}}) \in \mathbb{C}^{N\times N}$, where $ \beta_{n}^{w_{k}}\in \{0,1\}$, $ 	(\beta_{n}^{t})^{2}+(\beta_{n}^{r})^{2}=1 $, and $ |\theta_{i}^{w_{k}}|=1, \forall n \in \mathcal{N} $. 
	\begin{remark}
		The  amplitude coefficients for transmission and reflection are restricted to binary values, which means that the MS protocol is  a special case of the ES protocol. Consequently, the ES protocol is superior to the MS counterpart since the latter  cannot achieve the full-dimension transmission and reflection beamforming gain. However, a trade-off appears since the MS protocol comes with the advantage of lower computational complexity regarding the  PBM design.
	\end{remark}

	\subsection{Channel model}
	Under these conditions, we consider narrowband quasi-static block fading channels based on a time-division duplex protocol that is preferable in mMIMO systems. Let 
	\begin{align}
		\bG&=\sqrt{\tilde{ \beta}}\bR_{\mathrm{BS}}^{1/2}\bD\bR_{\mathrm{RIS}}^{1/2},\label{eq2}\\
		\bq_{k}&=\sqrt{\tilde{ \beta}_{k}}\bR_{\mathrm{RIS}}^{1/2}\bc_{k}
	\end{align}
	be the  channels under correlated Rayleigh fading conditions between the BS and the STAR-RIS, and between  the STAR-RIS and UE $ k $, respectively. The correlation matrices $ \bR_{\mathrm{BS}} \in \mathbb{C}^{M \times M} $ and $ \bR_{\mathrm{RIS}} \in \mathbb{C}^{N \times N} $ correspond to the  BS and the RIS, and are  deterministic Hermitian-symmetric positive semi-definite. The former, i.e., $  \bR_{\mathrm{BS}} $  can be modeled e.g., as in \cite{Hoydis2013}, and the latter, which is $ \bR_{\mathrm{RIS}} $ is modeled as in \cite{Bjoernson2020}.  \textcolor{black}{Specifically, we assume that $  \bR_{\mathrm{BS}}=[\bA~\b0_{M\times M-P}]$, where $ \bA=[\ba(\varphi_{1}) \cdots \ba(\varphi_{P})] \in \mathbb{C}^{N\times P}$ is composed of the steering vector $ \ba(\varphi)\in \mathbb{C}^{N} $ defined as
		\begin{align}
			\ba(\varphi)=\frac{1}{\sqrt{P}}[1, e^{-i2\pi\omega \sin(\varphi), \ldots,-i2\pi\omega (N-1)\sin(\varphi)}]^{\T}, 
		\end{align}
		where $ \omega =0.3 $ is the antenna spacing in multiples of the wavelength and $ \varphi_{p}=-\pi/2+(p-1)\pi/P $, $ p= 1, \ldots, P$ are the uniformly distributed angles of transmission with $ P=M/2 $ \cite{Hoydis2013}. Similarly, we describe $  \bR_{\mathrm{R}} $.  Also, the $ (l,m) $th entry of $ \bR_{\mathrm{RIS}} $ is described as
		\begin{align}
			[\bR_{\mathrm{RIS}}]_{l,m}=\mathrm{sinc}\Big(\frac{2\|\bu_{l}-\bu_{m}\|}{\lambda}\Big), ~~\forall \{l,m\}\in \mathcal{N},
		\end{align}
		where $ \|\bu_{l}-\bu_{m}\| $ expresses the distance between the $ l $th and $ m $th  STAR-RIS elements, and $ \lambda $ is the wavelength \cite{Bjoernson2020}.} The vectors $ \mathrm{vec}(\bD)\sim \mathcal{CN}\left(\b0,\Id_{MN}\right) $ and  $ \bc_{k} \sim \mathcal{CN}\left(\b0,\Id_{N}\right) $ describe the corresponding fast-fading components. Also, $\tilde{ \beta} $ and $\tilde{ \beta}_{k} $ describe the  path-losses  of the BS-RIS and RIS-UE $ k $ links in $ t $ or $r  $ region, respectively. Together with the correlation matrices, they are assumed to be known by the network.

	\section{Downlink Data Transmission}\label{PerformanceAnalysis}
	We present the ideal downlink signal model, introduce the hardware impairments at both the STAR-RIS and the transceiver, and continue with the realistic model including the HWIs. The downlink transmission of data from the BS to all UEs consists of a broadcast channel, which requires a specific precoding strategy in terms of a precoding vector $\bff_{k} \in \bbC^{M \times 1}$. 
	
	\subsection{Ideal  Signal Model}\label{SignalModel}
	The ideal received complex baseband signal  by UE $ k $ is written as
	\begin{align}
		y_{k}=\bar{\bh}^\H_{k}\bs+z_{k},\label{eq:Ypt1}
	\end{align}
	where    $\bs= \sum_{i=1}^{K}\sqrt{p_{i}}\bff_{i}l_{i}$ denotes the transmit signal vector  by the BS with $ \bff_{i} \in \mathbb{C}^{M \times 1}$,  $ p_{i} $, and $ l_{i}\sim \mathcal{CN}(0,1) $ being the precoder, the transmit power, and data symbol for UE $ i $, respectively.  Moreover, $ \bar{\bh}_{k}=  \bG\bPhi_{w_{k}}^{\mathrm{\X}}	\bq_{k} $ is the overall channel vector for $ \mathcal{X}=\{ES,MS\} $. Also, $z_{k} \sim \cC\cN(0,\sigma^{2})$ is the additive  white complex Gaussian noise at UE $k$. The transmit signal satisfies the following average power constraint per UE 
	\begin{align}
		\EE[\|\bs\|^{2}]=\tr(\bP\bF^{\H}\bF)\le P_{\mathrm{max}},
	\end{align}
	where $ \bP=\diag(p_{1}, \ldots, p_{K}) $, $ \bF=[\bff_{1}, \ldots, \bff_{K}] \in\mathbb{C}^{M \times K}$, and $ P_{\mathrm{max}} $ is the transmit power constraint at the BS.

	\subsection{RIS Phase Noise}
	In practice, phase errors appear because it is not possible to configure the RIS elements with infinite precision  \cite{Badiu2019}. Mathematically, a random diagonal phase error matrix consisting of $ N $ random phase errors can be used to model this phase noise for  each element. Specifically, we denote $ \tilde{\bPhi}_{w_{k}}^{\X}=\diag(\tilde{\theta}_{1}^{w_{k}}, \ldots,  \tilde{\theta}_{N}^{w_{k}}) \in \mathbb{C}^{N\times N}$, where $ \mathcal{X}=\{\mathrm{ES},\mathrm{MS}\} $ with $ \tilde{\theta}_{i}^{w_{k}}, i=1,\ldots,N $ being the random phase errors of the RIS phase shifts of UE $ k $ that are i.i.d. randomly distributed in $ [-\pi, \pi) $ and based on a certain circular distribution. The probability density function (PDF) of $\tilde{\theta}_{i}^{w_{k}} $ is assumed symmetric with its mean direction equal to zero, i.e., $ \arg\left(\EE[\mathrm{e}^{j \tilde{\theta}_{i}^{w_{k}}}]\right)=0 $ \cite{Badiu2019}. 
	
	The most common  PDFs that could be used to describe the phase noise on a RIS are the Uniform and the Von Mises distributions \cite{Badiu2019}. The former describes a complete lack of knowledge and its  characteristic function (CF), denoted by $ m $, is equal to $0 $. The latter has a zero mean and concentration parameter $ \kappa_{\tilde{\theta}} $, 	which captures the accuracy of the estimation. Its CF is $ m= \frac{\mathrm{I}_{1}\!\left(\kappa_{\tilde{\theta}}\right)}{\mathrm{I}_{0}\!\left(\kappa_{\tilde{\theta}}\right)}$, where $ \mathrm{I}_{p}\!\left(\kappa_{\tilde{\theta}}\right)$ is the modified Bessel function of the first kind and order 	$ p $.

	In the following analysis, we are going to use the variance $ \bar{\bR}_{k}=\EE\{\bar{\bh}_{k}\bar{\bh}_{k}^{\H}\} $ of the cascaded channel vector for UE $ k $ $ \bar{\bh}_{k}= \bG\bPhi_{w_{k}}^{\X}\tilde{\bPhi}_{w_{k}}^{\X} \bq_{k} $ given the PBM. In particular, $ \bar{\bR}_{k} $ can be written as
	\begin{align}
		\bar{\bR}_{k}&=\EE\{\bar{\bh}_{k}\bar{\bh}_{k}^{\H}\}\\
		&=\EE\{ \bG\bPhi_{w_{k}}^{\X}\tilde{\bPhi}_{w_{k}}^{\X} \bq_{k}  \bq_{k}^{\H}(\tilde{\bPhi}_{w_{k}}^{\X} )^{\H} (\bPhi_{w_{k}}^{\X})^{\H}\bG^{\H}\}\label{eq7}\\
		&=\tilde{ \beta}_{k}\EE\{ \bG\bPhi_{w_{k}}^{\X}\tilde{\bPhi}_{w_{k}}^{\X} \bR_{\mathrm{RIS}}(\tilde{\bPhi}_{w_{k}}^{\X} )^{\H} (\bPhi_{w_{k}}^{\X})^{\H}\bG^{\H}\}\label{eq8}\\
		&=
		\hat{\beta}_{k}\tr(\bR_{\mathrm{RIS}}\bPhi_{w_{k}}^{\X} \tilde{\bR}_{\mathrm{RIS}}  (\bPhi_{w_{k}}^{\X})^{\H})\tilde{\bR}_{\mathrm{BS}},\label{cov1}
	\end{align}
	where in \eqref{eq7}, we  have  substituted the cascaded  channel. In \eqref{eq8}, we have exploited the independence between $ \bG $ and $ \bq_{k} $, and have considered that  $ \EE\{	\bq_{k}	\bq_{k}^{\H}\} =\tilde{ \beta}_{k} \bR_{\mathrm{RIS}}$.   Next, we have used $ \tilde{\bR}_{\mathrm{IRS},k}=\EE\{\tilde{\bPhi}_{w_{k}}^{\X} \bR_{\mathrm{RIS}}(\tilde{\bPhi}_{w_{k}}^{\X} )^{\H}\}= m^{2}\bR_{\mathrm{IRS},k}+\left(1-m^{2}\right)\Id_{N}$ with $ m $ denoting the CF of phase noise \cite[Eq. 12]{Papazafeiropoulos2021}. Moreover, in \eqref{cov1}, we used that   $ \EE\{\bD \bU\bD^{\H}\} =\tr (\bU) \Id_{M}$ with $\bU  $ being a deterministic square matrix. Also, we have denoted that $\hat{\beta}_{k}= \tilde{ \beta}\tilde{ \beta}_{k} $. Henceforth, we omit the superscript $ \mathcal{X} $ for the sake of clarity. However, in Section \ref{Numerical}, we are to examine  both protocols.
	\begin{remark}
		In the case of independent Rayleigh fading, i.e., when   $\bR_{\mathrm{RIS}} =\bR_{\mathrm{BS}} =\Id_{N} $, the variance of the cascaded channel is $ \bar{\bR}_{k} =\hat{\beta}_{k}( \sum_i^{N}(\beta_{i}^{w_{k}})^{2}) \Id_{M}$, i.e., it  does not depend on the phase shifts but only on the amplitudes \cite{Papazafeiropoulos2022}. Hence, $ \bar{\bR}_{k} $ cannot be optimized with respect to the phase shifts but only regarding the amplitudes. In the special case that we have optimal reflection and transmission, which means $ \beta_{i}^{w_{k}}=1 $, we obtain $ \bar{\bR}_{k} =\hat{\beta}_{k}N\Id_{M}$ that increases with the number of RIS elements.
	\end{remark} 
	
	\subsection{Transceiver Hardware Impairments}
	In practical systems, both the transmitter and the receiver are impaired by hardware distortions. Generally, we meet  them at both the transmitter and the receiver. Herein, we focus on the additive power-dependent distortion noises at the transmitter and receiver
	while  the study of multiplicative phase noise at the transceiver oscillators is left for future work. Note that given that we focus on STAR-RIS, the study of its impairment is more important than any transceiver distortions, which have been already studied in the literature. However, these impairments have not been taken into account during the analysis of STAR-RIS-assisted systems. It is worthwhile to mention that  manufacturers give certain hardware specifications, which allow assuming that   the HWIs parameters are known.

	In particular, additive distortion noise appears at both the transmitter and the receiver because of inevitable residual  impairments coming from imperfect compensation of the quantization noise in the Analog-to-Digital Converters (ADCs) at the receiver, the I/Q
	imbalance, etc. \cite{Studer2010}. The consequence is that a mismatch appears between the signal that is intended to be transmitted and the generated signal at the transmitter side, while the  signal is distorted during the reception processing at the receiver side.

	Mathematically speaking the transmitter and receiver distortion noises are distributed as complex Gaussian variables, which is corroborated by measurement results \cite{Studer2010,Wenk2010}. The justification of the Gaussianity relies on the fact that the additive distortion can be seen as the result of the aggregate contribution of many impairments. Specifically, if $ \bs \in \mathbb{C}^{M \times 1} $ is the data signal vector, we have
	\begin{align}
		\etav_{\mathrm{BS}}&\sim \mathcal{CN}(\b0, \bm \Upsilon^{\mathrm{BS}}),\label{HWI_BS}\\
		\eta_{\mathrm{UE},k}&\sim\mathcal{CN} (0,\upsilon_{k}^{\mathrm{UE}}),\label{HWI_UE}
	\end{align}
	where $ \bm \Upsilon^{\mathrm{BS}}=\kappa_{\mathrm{BS}}^{2}\diag(W_{11},\ldots, W_{MM}) $ and $ \upsilon_{k}^{\mathrm{UE}}=\kappa_{\mathrm{UE}}^{2} \bh_{k}^{\H}\bW \bh_{k} $ with $ \bW=\EE\{\bs\bs^{\H}\}=\sum_{i=1}^{K}{p_{i}} $, $ W_{ii} $ is the $ i $th diagonal element of $ \bW $ and $ P_{\mathrm{max}}=\tr(\bW) $. Without any loss, we assume that all UEs present the 	same level of impairments. The variables $ \kappa_{\mathrm{BS}}^{2} $ and $ \kappa_{\mathrm{UE}}^{2} $ are proportionality coefficients, which denote the severity of T-HWIs. 
	
	\subsection{Realistic  Signal Model with HWIs}\label{SignalModel1}
	The received signal by UE $ k $, impaired by both RIS phase noise and T-HWIs, is given by 
	\begin{align}
		y_{k}=\bar{\bh}^\H_{k}(\bs+\etav_{\mathrm{BS}})+\\eta_{\mathrm{UE},k}+z_{k}.\label{DLreceivedSignal}
	\end{align}
	The additive distortions in \eqref{HWI_BS} and \eqref{HWI_UE} can be written as
	\begin{align}
		\Upsilon^{\mathrm{BS}}&=\kappa_{\mathrm{BS}}^{2}\sum_{i=1}^{K}{p_{i}}\Id_{M},\\
		\upsilon^{\mathrm{UE}}&=\sum_{i=1}^{K}{p_{i}}\kappa_{\mathrm{UE}}^{2}\bh_{k}^{\H} \bh_{k}.
	\end{align}

	Thus, under perfect CSI conditions, the downlink SINR at $  k$th  UE is given by \eqref{SINR10}, 
\begin{figure*}
	\begin{align}
		\gamma_{k}=\frac{{p_{k}}|\bh_{k}^{\H}\bff_{k}|^{2}}{\sum_{i\ne k}{p_{i}}|\bh_{k}^{\H}\bff_{i}|^{2}+\kappa_{\mathrm{BS}}^{2}|\sum_{i=1}^{K}{p_{i}}\bh_{k}^{\H} \bh_{k}|^{2}+\sum_{i=1}^{K}{p_{i}}\kappa_{\mathrm{UE}}^{2}|\bh_{k}^{\H} \bh_{k}|^{2} +1}\label{SINR10}.
	\end{align}
\line(1,0){490}
\end{figure*}
	where $ \bh_{k}\sim \mathcal{CN}(\b0, \bR_{k}) $ with $ \bR_{k}=	\bar{\bR}_{k}/\sigma^{2}=\rho_{k}\tr(\bR_{\mathrm{RIS}} \bPhi_{w_{k}} \bR_{\mathrm{RIS}}  \bPhi_{w_{k}}^{\H})\bR_{\mathrm{BS}}$ by denoting $ \rho_{k}=\hat{\beta}_{k}/\sigma^{2} $ to normalize the noise variance to unity.

	\section{Minimum SINR Maximization}\label{optimization}
	The focus of this work is the optimization of the downlink minimum  SINR under constraints regarding the   sum-power and the PBM.
	The optimization problem is formulated as
	\begin{align}
		\!\!	(\mathcal{P}1)~&\max_{  \bp, \{\thetv,\betv\}} \min_{k} \frac{\gamma_{k}\left( \bp, \{\thetv,\betv\}\right)}{a_{k}}\label{Maximization1} \\
		&~\mathrm{s.t.}~~~~~~~~\bw^{\T}\bp \le P_{\mathrm{max}},~p_{k}>0,~ \|\bff_{k}\|=1, \forall k\label{Maximization3} \\
		&~~~~~~~~~~~~~ (\beta_{n}^{t})^{2}+(\beta_{n}^{r})^{2}=1,  \forall n \\
		&~~~~~~~~~~~~~ \beta_{n}^{t}\ge 0, \beta_{n}^{r}\ge 0,~\forall n \\
		&~~~~~~~~~~~~~ |\theta_{n}^{t}|=|\theta_{n}^{r}|=1, ~\forall n \label{Maximization4} 
	\end{align}
	where $ \ba_{l}=[a_{1}, \ldots, a_{K}]^{\T} $ is the priority vector with $ a_{k}>0 $ being the priority assigned to UE $ k $ by the network, $ \bw_{l}=[w_{1}, \ldots, w_{K}]^{\T} $ is the weight vector with $ w_{k} $  being associated with power $ p_{k} $ in the power constraint,  and $ \bp=[p_{1}, \ldots, p_{K}]^{\T} $. In the case that a specific UE has to be excluded from the power constraint, its corresponding weight can be chosen to be extremely small. 
	
	The problem in $ (\mathcal{P}1) $ is non-convex and presents coupling among the optimization variables, which are the active and passive beamforming at the BS and the STAR-RIS, respectively. This difficulty is tackled by following the common  alternating optimization in two stages. \textcolor{black}{In the first stage,  for any given passive beamforming, we provide the optimal linear precoder design in terms of the OLP and the optimal allocated power;  the next stage includes the STAR-RIS design. Specifically, in the first stage and  based on the nonlinear Perron–Frobenius theory, initially, we assume fixed transmit beamformers. Then, the optimization is reformulated as a geometric program, where we obtain the Lagrangian. Next,  we obtain the optimal transmit beamformers and the optimal downlink power. In the second stage, concerning the optimization of the STAR-RIS, we optimize simultaneously both the phase shifts and the amplitudes of each element.}

	\subsection{Optimal  Linear Precoder with HWIs}	
	For any given PBM, the following max-min SINR problem provides the OLP and the
	optimally allocated powers $ \bp $ by taking HWIs into account.\footnote{\textcolor{black}{Note that \cite{Cai2011} has obtained the OLP with no HWIs. However, the  extra term, corresponding to T-HWI,  induces many differences with increased difficultly in terms of manipulations with respect to the equations in \cite{Cai2011}.}}
	\begin{align}
		\!\!	(\mathcal{P}2)~&\max_{\bp,   \{\thetv,\betv\}} \min_{k} \frac{\gamma_{k}}{a_{k}}\left( \bp, \{\thetv,\betv\}\right)\label{Maximization10} \\
		&~\mathrm{s.t.}~~~~~~~~\bw^{\T}\bp \le P_{\mathrm{max}},~p_{k}>0,~ \|\bff_{k}\|=1, \forall k.\label{Maximization30} 
	\end{align}
	
	\begin{theorem}\label{TheoremOLP}
		The  solution to $ 	(\mathcal{P}2) $, i.e., the OLP with HWIs is given by
		\begin{align}
			\bff_{k}^{*}=\frac{\bSigma\bh_{k}}{\|\bSigma\bh_{k}\|}\label{precoder1},
		\end{align}
		where $ \bSigma= (\sum_{i \ne k}\xi_{i}^{*}			(\bar{\kappa}+1)\bh_{i}\bh_{i}^{\H}+\xi_{k}^{*}\bar{\kappa}\bh_{k}\bh_{k}^{\H}+w_{k}\Id_{M})^{-1}$ with $\bar{\kappa}= \kappa_{\mathrm{BS}}^{2}+\kappa_{\mathrm{UE}}^{2} $, and 
		the parameters $ \xi_{i} $ are the unique positive solution of the
		following fixed-point equations
		\begin{align}
			\xi_{k}^{*}=\frac{\delta^{*}}{\bh_{k}^{\H}\bSigma\bh_{k}}\label{xk}
		\end{align}
		with $ \delta^{*} $ being the minimum SINR under OLP that is provided by
		\begin{align}
			\delta^{*}=\frac{K P_{\mathrm{max}}}{\sum_{k=1}^{K}(\bh_{k}^{\H}\bSigma\bh_{k})^{-1}}.\label{tau}
		\end{align}
		
		Regarding the optimal power coefficients $ p_{i} $, they are obtained such that $ \frac{\gamma_{1}^{*}}{a_{1}}=\frac{\gamma_{2}^{*}}{a_{2}}=\cdots=\frac{\gamma_{K}^{*}}{a_{K}}=\delta^{*} $. 
		Thus, we obtain
		\begin{align}
			\bp^{*}=\delta^{*}(\Id_{K}-\delta^{*}\bD\bT)^{-1}\bD\one_{K},\label{dlpower}
		\end{align}
		where $ \bD=\diag(\frac{1}{|\bh_{1}^{\H}\bff_{1}^{*}|^{2}, \ldots, |\bh_{K}^{\H}\bff_{K}^{*}|^{2}}) $ , and $ [\bT]_{k,i}=|\bh_{k}^{\H}\bff_{i}| +b$ with $ b= (
		\kappa_{\mathrm{BS}}^{2}+\kappa_{\mathrm{UE}}^{2})|\bh_{k}^{\H} \bh_{k}|^{2}$, if $ i\ne k $ or $ b $ otherwise.

	\end{theorem}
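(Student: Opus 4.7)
The plan is to extend the nonlinear Perron--Frobenius / Lagrangian-duality treatment of \cite{Cai2011} to accommodate the two additive HWI terms appearing in \eqref{SINR10}. The first observation I would use is that at the optimum of $(\mathcal{P}2)$, all weighted SINRs must be equal, i.e., $\gamma_{k}^{*}/a_{k}=\delta^{*}$ for every $k$; otherwise power could be redistributed to simultaneously raise the minimum. This converts the max-min problem into a system of $K$ equality constraints together with the tight power constraint $\bw^{\T}\bp=P_{\max}$.

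For fixed directions $\bff_{k}$, I would rewrite the $K$ weighted-SINR equalities as a linear system $\bp=\delta^{*}\bD\bT\bp+\delta^{*}\bD\one_{K}$, where $\bD$ collects the useful-signal gains $|\bh_{k}^{\H}\bff_{k}|^{2}$ and $\bT$ collects the inter-user interference plus the common HWI offset $b=(\kappa_{\mathrm{BS}}^{2}+\kappa_{\mathrm{UE}}^{2})|\bh_{k}^{\H}\bh_{k}|^{2}$ appearing in every row because of the T-HWIs. Inverting $(\Id_{K}-\delta^{*}\bD\bT)$ then yields \eqref{dlpower}, so once $\delta^{*}$ and the beamforming directions are known, the optimal powers follow immediately.

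The heart of the argument is characterising each $\bff_{k}$. I would attach dual variables $\xi_{i}$ to the $K$ weighted SINR inequalities and study the Lagrangian dual, which by uplink-downlink duality becomes a virtual uplink max-min SINR problem with noise covariance $w_{k}\Id_{M}$ at the BS. Stationarity in $\bff_{k}$ forces the unit-norm direction to align with the MMSE-type receiver $\bSigma\bh_{k}$, where $\bSigma$ is precisely the inverse matrix in the statement. The main obstacle, which is new compared to \cite{Cai2011}, is to correctly account for the HWI contributions in that sum: every $\bh_{i}\bh_{i}^{\H}$ with $i\ne k$ is weighted by $\xi_{i}(\bar{\kappa}+1)$ since it appears both in the ordinary interference $|\bh_{k}^{\H}\bff_{i}|^{2}$ and in the two HWI terms $\kappa_{\mathrm{BS}}^{2}(\sum_{i}p_{i})+\kappa_{\mathrm{UE}}^{2}\sum_{i}p_{i}\bh_{k}^{\H}\bh_{k}$, while the self term $\bh_{k}\bh_{k}^{\H}$ receives only $\xi_{k}\bar{\kappa}$ because its signal contribution has already been moved to the numerator side of the equality constraint and only its HWI part survives in the interference. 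Tracking which $p_{i}$ is multiplied by which $\bh_{j}\bh_{j}^{\H}$ across the two distortion terms is the bookkeeping that generates the $\bar{\kappa}+1$ versus $\bar{\kappa}$ asymmetry in $\bSigma$.

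Finally, substituting $\bff_{k}^{*}=\bSigma\bh_{k}/\|\bSigma\bh_{k}\|$ back into the weighted SINR equality produces $\xi_{k}^{*}=\delta^{*}/(\bh_{k}^{\H}\bSigma\bh_{k})$, which is \eqref{xk}. Combining the $K$ reciprocal relations with the tight power constraint $\bw^{\T}\bp^{*}=P_{\max}$ (and taking $w_{k}\equiv 1$, as implicit in the statement of \eqref{tau}) yields \eqref{tau} after using the trace identity $\sum_{k}\xi_{k}^{*}\bh_{k}^{\H}\bSigma\bh_{k}$ to eliminate $\bSigma$. A closing technical step is to verify that the map $\xi\mapsto (\xi_{k}\mapsto \delta^{*}/(\bh_{k}^{\H}\bSigma(\xi)\bh_{k}))$ is a standard interference function in Yates' sense, so that the fixed-point equations admit a unique positive solution reached by iteration and $\bSigma\succ\b0$ throughout; the presence of the extra self term $\xi_{k}\bar{\kappa}\bh_{k}\bh_{k}^{\H}$ requires re-checking monotonicity and scalability, which is where the HWI generalisation introduces the only substantive departure from \cite{Cai2011}.
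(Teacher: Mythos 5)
Your proposal follows essentially the same route as the paper's Appendix B: equalisation of the weighted SINRs at the optimum, a Lagrangian/KKT treatment in the spirit of \cite{Cai2011} combined with uplink--downlink duality to obtain the MVDR-form precoder with the $(\bar{\kappa}+1)$ versus $\bar{\kappa}$ weighting in $\bSigma$, elimination of the precoder to get the fixed-point equation for $\xi_{k}^{*}$, and inversion of the linear system $\bp=\delta^{*}\bD\bT\bp+\delta^{*}\bD\one_{K}$ for the powers. The only cosmetic differences are that the paper passes through an explicit geometric-program reformulation (logarithmic change of variables) to justify the KKT conditions, whereas you invoke the Lagrangian directly and add a Yates-type standard-interference-function check for uniqueness of the fixed point; both are sound.
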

	\begin{proof}
		Please see Appendix~\ref{TheoremOLPproof}.	
	\end{proof}
	
	In the case of ideal transceiver hardware and no phase noise at the RIS, we obtain the results in  \cite{Cai2011}.

	\subsection{Deterministic Equivalent Analysis}
	For the STAR-RIS design below, we resort to the large system analysis, where the various expressions become deterministic because: i) the matrix inversion operation in \eqref{tau}, which is highly computationally demanding, has to be computed at every iteration; ii) both expressions, i.e., $ \delta^{*} $ and $\xi_{k}^{*}  $ depend on instantaneous CSI, i.e., they vary at the order of milliseconds. In other words, they have to be computed at every channel realization. For these reasons, we take advantage of statistical CSI, which varies at every several coherence intervals. In this direction,  we obtain the DEs of $ \xi_{k}^{*} $ and $ \delta^{*} $ as $ M, N, K $ grow large since these expressions will rely only on the correlation matrices and the path-losses. This procedure, deriving the deterministic SINR $ \delta^{*} $ with OLP, will enable us to obtain the optimal phase shifts and amplitudes based only on large-scale statistics (statistical CSI). For the sake of making the problem analytically tractable, we consider 	a system with a common user channel correlation matrix i.e., $ \bR_{k}=\bR $, $ \forall k $ \cite{Kammoun2020}. \textcolor{black}{Although it is possible to consider different correlation matrices, the extension to this case  results in a  mathematically much more involved problem, and is left for future work. In practice, the considered scenario can be met in networks where users are clustered on the basis of their covariance matrices \cite{Adhikary2013,Papazafeiropoulos2022b}.	}  The derivations below rely on the following assumptions.
	
	\begin{assumption}
		$ M $, $ N $, and $ K $ grow large with a bounded ratio as $ 0<\lim \inf\frac{K}{M}\le \lim \sup \frac{K}{M}< \infty $ and $ 0<\lim \inf\frac{N}{M}\le \lim \sup \frac{N}{M}< \infty $, which is equivalent to the notation $\xrightarrow[ n \rightarrow \infty]{\mbox{a.s.}}$ denotes almost sure convergence as $ n \rightarrow \infty $.
	\end{assumption}
	\begin{assumption}
		The correlation matrix $ \bR_{\mathrm{RIS}}  $ will satisfy $ \lim \sup_{N} \|\bR_{\mathrm{RIS}} \|< \infty.$ Similarly, for $ \bR_{\mathrm{BS}} $.
	\end{assumption}
	
	The following theorem exploits the DE analysis to demonstrate that $ \xi_{k}^{*} $ and $ \delta^{*} $ tend asymptotically to deterministic expressions
	\begin{theorem}\label{TheoremDetTau}
		Based on the Assumptions \textcolor{black}{$ 1,2 $}, the  DE of  $ \delta^{*} $, obeying to $ |\delta^{*}-\bar{\delta}|\xrightarrow[ n \rightarrow \infty]{\mbox{a.s.}}0 $ is obtained as 
		the unique positive solution to the following fixed point equation
		\begin{align}
			\bar{\delta}=\tr\bigl(\bR\bigl(\frac{1}{1+\bar{\delta}}((	\bar{\kappa}+1)K+	\bar{\kappa})\bR+\xi \Id_{M}\bigr)^{-1}\bigr),	
		\end{align}
		where  $ \bR=\tr(\bR_{\mathrm{RIS}} \bPhi_{w_{k}} \tilde{\bR}_{\mathrm{RIS}}  \bPhi_{w_{k}}^{\H})\bR_{\mathrm{BS}} $ and $ \eta=\frac{1}{P_{\mathrm{max}}}\sum_{j=1}^{K}\frac{1}{p_{j}}$.
		Also, we have $ |	\xi_{k}^{*}-\bar{	\xi}_{k}|\xrightarrow[ n \rightarrow \infty]{\mbox{a.s.}}0 $ with 
		\begin{align}
			\bar{	\xi_{k}}=\frac{P_{\mathrm{max}}}{p_{k}}\frac{1}{\sum_{j=1}^{K}\frac{1}{p_{j}}}.
		\end{align}
	\end{theorem}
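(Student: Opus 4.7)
The goal is to replace the random $\xi_k^{*}$ and $\delta^{*}$ of Theorem~\ref{TheoremOLP} by their large-system limits. The starting point is the coupled fixed-point system $\delta^{*}=KP_{\max}/\sum_{k}(\bh_k^{\H}\bSigma\bh_k)^{-1}$, $\xi_k^{*}=\delta^{*}/(\bh_k^{\H}\bSigma\bh_k)$, together with the explicit expression for $\bSigma$. My first move is to extract $\bh_k$ from $\bSigma$ via Sherman--Morrison. Setting $\bSigma_{-k}=(\sum_{i\ne k}\xi_i^{*}(\bar{\kappa}+1)\bh_i\bh_i^{\H}+w_k\Id_M)^{-1}$, which is independent of $\bh_k$, a rank-one update gives $\bh_k^{\H}\bSigma\bh_k=\bh_k^{\H}\bSigma_{-k}\bh_k/(1+\xi_k^{*}\bar{\kappa}\,\bh_k^{\H}\bSigma_{-k}\bh_k)$, whence the cleaner scalar relation $\xi_k^{*}(1-\bar{\kappa}\delta^{*})\,\bh_k^{\H}\bSigma_{-k}\bh_k=\delta^{*}$.

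The second step is the standard random-matrix machinery. Writing $\bh_k=\bR^{1/2}\bx_k$ with $\bx_k\sim\cC\cN(\b0,\Id_M)$ independent of $\bSigma_{-k}$, the trace lemma gives $\bh_k^{\H}\bSigma_{-k}\bh_k-\tr(\bR\bSigma_{-k})\xrightarrow[n\to\infty]{\mathrm{a.s.}}0$. I then invoke the Wagner--Couillet--Debbah deterministic equivalent for the resolvent of a sum of rank-one Gaussian terms with a common spectral profile---available only because of the common-$\bR$ hypothesis---to obtain $\tr(\bR\bSigma_{-k})\to e$, where $e$ solves $e=\tr\bigl(\bR\bigl(\sum_{j\ne k}\frac{\xi_j^{*}(\bar{\kappa}+1)\bR}{1+\xi_j^{*}(\bar{\kappa}+1)e}+w_k\Id_M\bigr)^{-1}\bigr)$.

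The third step collapses this coupled system to one scalar equation. Combining Sherman--Morrison with $\xi_k^{*}=\delta^{*}/(\bh_k^{\H}\bSigma\bh_k)$ and the equalised-SINR power allocation of Theorem~\ref{TheoremOLP} forces, asymptotically, $\xi_j^{*}(\bar{\kappa}+1)e/(1+\xi_j^{*}(\bar{\kappa}+1)e)\to\bar{\delta}/(1+\bar{\delta})$ uniformly in $j$. Summing the $K-1$ off-user terms and folding in the $\xi_k^{*}\bar{\kappa}$ self-contribution yields the coefficient $(\bar{\kappa}+1)K+\bar{\kappa}$ in front of $\bR/(1+\bar{\delta})$, while the $w_k\Id_M$ piece, after inserting the closed-form $\bp^{*}$ from Theorem~\ref{TheoremOLP}, becomes $\eta\Id_M$ with $\eta=\frac{1}{P_{\max}}\sum_j 1/p_j$; this produces the stated fixed-point equation for $\bar{\delta}$. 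The companion formula $\bar{\xi}_k=P_{\max}/(p_k\sum_j 1/p_j)$ then drops out of $\xi_k^{*}(1-\bar{\kappa}\delta^{*})e=\delta^{*}$ once one observes that equalised SINRs make $p_k\xi_k^{*}$ independent of $k$ and the power budget $\bw^{\T}\bp^{*}=P_{\max}$ fixes the normalising constant.

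\textbf{Main difficulty.} The principal obstacle is that the $\xi_j^{*}$ inside $\bSigma_{-k}$ are themselves random and coupled to every $\bh_j$, so the deterministic-equivalent theorems cannot be applied as a black box. I plan to handle this by writing the joint deterministic fixed-point system in the unknowns $(\{\bar{\xi}_k\},\bar{\delta},e)$, establishing existence and uniqueness via a standard-interference-function / monotonicity argument, and then upgrading pointwise convergence of the quadratic forms to simultaneous convergence of the whole system by a continuity/perturbation estimate in the spirit of~\cite{Kammoun2020}. The hypothesis $\bR_k=\bR$ is exactly what causes the per-user indices to collapse and delivers the stated compact closed form.
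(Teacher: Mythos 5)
Your plan is sound and lands on the same overall strategy as the paper: exploit the common correlation matrix $\bR_k=\bR$ to collapse the per-user quadratic forms onto a single deterministic fixed point, and treat the coupling between the random $\xi_j^{*}$ and the channels as the real obstacle rather than applying DE theorems as a black box. The mechanics differ in two places worth noting. First, you decouple $\bh_k$ from $\bSigma$ via Sherman--Morrison and the relation $\xi_k^{*}(1-\bar{\kappa}\delta^{*})\,\bh_k^{\H}\bSigma_{-k}\bh_k=\delta^{*}$; the paper instead normalizes $\tilde{\bh}_k=p_k^{-1/2}\bh_k$, keeps the self-term inside $d_k=\tilde{\bh}_k^{\H}\bSigma\tilde{\bh}_k$, and invokes the rank-one perturbation lemma at the level of traces. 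Second, and more substantively, where you propose a standard-interference-function uniqueness argument plus a perturbation estimate to justify simultaneous convergence, the paper's device is to set $e_k=d_k/\tilde{d}$ for a candidate limit $\tilde{d}$, order the indices, and prove $\limsup e_K\le1$ and $\liminf e_1\ge1$ by contradiction using monotonicity of the trace functional in a scalar parameter $\mu$ -- an extremal-index sandwich that delivers $\max_k|d_k/\tilde{d}-1|\to0$ directly. Both routes rest on the same monotonicity structure and both would work; yours is closer to a textbook resolvent-DE argument, the paper's is the specific machinery of \cite{Kammoun2020}. One caution on your third step: the claim that $\xi_j^{*}(\bar{\kappa}+1)e/(1+\xi_j^{*}(\bar{\kappa}+1)e)\to\bar{\delta}/(1+\bar{\delta})$ is what one gets in the $\bar{\kappa}=0$ case; with HWIs the Sherman--Morrison denominators pick up $(\bar{\kappa}+1)\delta^{*}$ rather than $\delta^{*}$, so the constant in front of $\bR/(1+\bar{\delta})$ needs careful bookkeeping (the paper's own intermediate equation \eqref{tilded} and its final statement are not mutually consistent on exactly this point, e.g.\ whether the $\bar{\kappa}\bR$ term sits inside or outside the $1/(1+\bar{\delta})$ factor), so you should verify that collapse explicitly rather than asserting it.
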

	\begin{proof}
		Please see Appendix~\ref{Theorem1}.	
	\end{proof}
	
	In the special case of not a  STAR-RIS but of a conventional RIS, no HWIs, only a line-of-sight component between the BS and the STAR-RIS, we obtain the DE in \cite{Kammoun2020}.
	
	\subsection{STAR-RIS design}	
	Under OLP with  HWIs and infinite resolution phase shifters at the STAR-RIS, we formulate the PBM optimization problem as
	\begin{align}
		\!\!	(\mathcal{P}3)~&\max_{  \{\thetv,\betv\}}~~~~~~~ \bar{\delta} (\thetv,\betv)\label{Maximization10} \\
		&~\mathrm{s.t.}~~~~~~~~ (\beta_{n}^{t})^{2}+(\beta_{n}^{r})^{2}=1,  \forall n \\
		&~~~~~~~~~~~~~ \beta_{n}^{t}\ge 0, \beta_{n}^{r}\ge 0,~\forall n \\
		&~~~~~~~~~~~~~ |\theta_{n}^{t}|=|\theta_{n}^{r}|=1, ~\forall n
	\end{align}
	where $\thetv=[(\thetv^{t})^{\T}, (\thetv^{r})^{\T}]^{\T} $ and $\betv=[(\betv^{t})^{\T}, (\betv^{r})^{\T}]^{\T} $. We have  vertically stacked $\thetv^{t}$ and $\thetv^{r}$ into  a single vector $\thetv$, and $\betv^{t}$ and $\betv^{r}$ into  a single vector $\betv$ to result in a compact description. Also, for the sake  of exposition, we define two sets: $ \Theta=\{\thetv\ |\ |\theta_{i}^{t}|=|\theta_{i}^{r}|=1,i=1,2,\ldots N\} $, and $ \mathcal{B}=\{\betv\ |\ (\beta_{i}^{t})^{2}+(\beta_{i}^{r})^{2}=1,\beta_{i}^{t}\geq0,\beta_{i}^{r}\geq0,i=1,2,\ldots N\} $, which correspond  together to the feasible set of $ (\mathcal{P}3)$.
	
	Obviously, the  problem $ (\mathcal{P}3)  $ is non-convex and includes coupling among the optimization variables that consist of the amplitudes and the phase shifts for both transmission and reflection. Given that their projection operators of the sets $\Theta$ and $ \mathcal{B}$ can be obtained in closed-form, we resort to the application of  the projected gradient ascent algorithm (PGAM) \cite[Ch. 2]{Bertsekas1999} for the  simultaneous optimization of  $\thetv$ and $\betv$.
	
	We propose a PGAM consisting of the following iterations
	\begin{subequations}\label{mainiteration}\begin{align}
			\thetv^{n+1}&=P_{\Theta}(\thetv^{n}+\mu_{n}\nabla_{\thetv}\bar{\delta} (\thetv^{n},\betv^{n})),\label{step1} \\ \betv^{n+1}&=P_{\mathcal{B}}(\betv^{n}+{\mu}_{n}\nabla_{\betv}\bar{\delta} (\thetv^{n},\betv^{n})),\label{step2} \end{align}
	\end{subequations}
	where the superscript expresses the number of the iteration. As can be seen,  we start from the current iterate $(\thetv^{n},\betv^{n})$  towards the gradient direction with the aim to increase the objective. Note that  $\mu_n$ describes the step size for both $\thetv$ and $\betv$. Moreover, the operators in \eqref{mainiteration}, $P_{\Theta}(\cdot) $ and $ P_{\mathcal{B}}(\cdot) $ describe the projections onto $ \Theta $ and $ \mathcal{B} $, respectively. 
	
	The convergence of PGAM relies on the  suitable choice of the step size in \eqref{step1} and \eqref{step2}. Normally, we would resort to the Lipschitz constant of the gradient to find the ideal step size, but it is difficult to find it for our problem. Hence, we take advantage of the Armijo-Goldstein backtracking line search that returns the step size at each iteration. For this reason, we define a quadratic approximation of $\bar{\delta} (\thetv,\betv)$ as
	\begin{align}
		&	Q_{\mu}(\thetv, \betv;\bx,\by)=\bar{\delta} (\thetv,\betv)\nn\\
		&+\langle	\nabla_{\thetv}\bar{\delta} (\thetv,\betv),\bx-\thetv\rangle-\frac{1}{\mu}\|\bx-\thetv\|^{2}_{2}\nn\\
		&+\langle\nabla_{\betv}\bar{\delta} (\thetv,\betv),\by-\betv\rangle-\frac{1}{\mu}\|\by-\betv\|^{2}_{2}.
	\end{align}    
	The step size $  \mu_{n} $ in \eqref{mainiteration} can be obtained as $ \mu_{n} = L_{n}\kappa^{m_{n}} $, where $ m_{n} $ is the
	smallest nonnegative integer satisfying
	\begin{align}
		\bar{\delta} (\thetv^{n+1},\betv^{n+1})\geq	Q_{L_{n}\kappa^{m_{n}}}(\thetv^{n}, \betv^{n};\thetv^{n+1},\thetv^{n+1}),
	\end{align}
	which can be done by an iterative procedure. Note that $ L_n>0 $, and $ \kappa \in (0,1) $.  The proposed PGAM is summarized in Algorithm \ref{Algoa1}. 
	\begin{algorithm}[th]
		\caption{Projected Gradient Ascent Algorithm for the STAR-RIS Design\label{Algoa1}}
		\begin{algorithmic}[1]
			\STATE Input: $\thetv^{0},\betv^{0},\mu_{1}>0$, $\kappa\in(0,1)$
			\STATE $n\gets1$
			\REPEAT
			\REPEAT
			\STATE $\thetv^{n+1}=P_{\Theta}(\thetv^{n}+\mu_{n}\nabla_{\thetv}\bar{\delta} (\thetv^{n},\betv^{n}))$
			\STATE $\betv^{n+1}=P_{B}(\betv^{n}+\mu_{n}\nabla_{\betv}\bar{\delta} (\thetv^{n},\betv^{n}))$
			\IF{ $\bar{\delta} (\thetv^{n+1},\betv^{n+1})\leq Q_{\mu_{n}}(\thetv^{n},\betv^{n};\thetv^{n+1},\thetv^{n+1})$}
			\STATE $\mu_{n}=\mu_{n}\kappa$
			\ENDIF
			\UNTIL{ $\bar{\delta} (\thetv^{n+1},\betv^{n+1})>Q_{\mu_{n}}(\thetv^{n},\betv^{n};\thetv^{n+1},\thetv^{n+1})$}
			\STATE $\mu_{n+1}\leftarrow\mu_{n}$
			\STATE $n\leftarrow n+1$
			\UNTIL{ convergence}
			\STATE Output: $\thetv^{n+1},\betv^{n+1}$
		\end{algorithmic}
	\end{algorithm} 
	
	Regarding the projection onto the sets  $ \Theta $ and $ \mathcal{B} $, the former, for a given $\thetv\in \mathbb{C}^{2N\times 1}$  $P_{\Theta}(\thetv)$, is given by 
	\begin{equation}
		P_{\Theta}(\thetv)=\thetv/|\thetv|=e^{j\angle\thetv},
	\end{equation}
	where the operations on the right-hand side of the above equation are performed entrywise. 
	
	The latter projection, i.e.,  $P_{ \mathcal{B} }(\betv)$ requires special attention because the constraint $(\beta_{i}^{t})^{2}+(\beta_{i}^{r})^{2}=1,\beta_{i}^{t}\geq0,\beta_{i}^{r}\geq0$  defines the first quadrant of the  unit circle, which makes the expression of the projection onto $\mathcal{B}$ being rather complicated. The projection  $P_{\mathcal{B} }(\betv)$ can become more efficient by allowing  $\beta_{i}^{t}$ and $\beta_{i}^{t}$ to take negative value during the iterative process. We would like to emphasize that this assumption does not affect the optimality of the proposed solution  because we can modify the sign of both $\beta_i^{u}$ and $\theta_{i}^{u}$, $u\in{t,r}$  but still achieve the same objective. Thus,  we can write $P_{ \mathcal{B} }(\betv)$  as
	\begin{subequations}
		\begin{align}
			\left[\ensuremath{P_{\mathcal{B}}(}\boldsymbol{\beta})\right]_{i} & =\frac{\boldsymbol{\beta}_{i}}{\sqrt{\boldsymbol{\beta}_{i}^{2}+\beta_{i+N}^{2}}},i=1,2,\ldots,N,\\
			\left[\ensuremath{P_{\mathcal{B}}(}\boldsymbol{\beta})\right]_{i+N} & =\frac{\boldsymbol{\beta}_{i+N}}{\sqrt{\boldsymbol{\beta}_{i}^{2}+\beta_{i+N}^{2}}}, i=1,2,\ldots,N,
		\end{align}
		while we have projected $\beta_{i}^{t}$ and $\beta_{i}^{t}$ onto the entire unit circle.
	\end{subequations}

	We present the complex-valued gradient in the following proposition.
	\begin{proposition}\label{PropositionGradients}
		The complex gradients $ \nabla_{\thetv}\bar{\delta}(\thetv,\betv) $ and  $\nabla_{\betv}\bar{\delta}(\thetv,\betv) $ are given in closed-forms by
		\begin{align}
			\nabla_{\thetv}\bar{\delta}(\thetv,\betv) &=[\nabla_{\thetv^{t}}\bar{\delta}(\thetv,\betv)^{\T}, \nabla_{\thetv^{r}}\bar{\delta}(\thetv,\betv)^{\T}]^{\T},
		\end{align}

		\begin{subequations}
			\begin{align}
				\nabla_{\thetv^{t}}\bar{\delta}(\thetv,\betv)&=\begin{cases}
					\nu\diag\bigl(\mathbf{A}_{t}\herm\diag(\boldsymbol{{\beta}}^{t})\bigr) & w_{k}=t\\
					0 & w_{k}=r
				\end{cases},\label{derivtheta_t}\\
				\nabla_{\thetv^{r}}\bar{\delta}(\thetv,\betv)&=\begin{cases}
					\nu\diag\bigl(\mathbf{A}_{r}\diag(\boldsymbol{{\beta}}^{r})\bigr) & w_{k}=r\\
					0 & w_{k}=t
				\end{cases}\label{derivtheta_r},\\
			\end{align}
		\end{subequations}
		where  $ \nu=										\hat{\beta}\tr\bigl(\mathbf{R}_{\mathrm{BS}}\bT+\frac{1}{1+\bar{\delta}}((	\bar{\kappa}+1)K+	\bar{\kappa})\bR\bT \mathbf{R}_{\mathrm{BS}}\bT\bigr)(1-\frac{1}{(1+\bar{\delta})^{2}} \tr\bR \bT  \bigl(\	(	\bar{\kappa}+1)K\bR+	\bar{\kappa}\bR\bigr)	)   \bT) $ and $\mathbf{A}_{i}=\mathbf{R}_{\mathrm{RIS}}\bPhi_{i}\tilde{\mathbf{R}}_{\mathrm{RIS}}$ for $ i=t,r $.
		Similarly, the  gradient $\nabla_{\betv}\bar{\delta}(\thetv,\betv) $ is given by
		\begin{align}
			\nabla_{\betv}\bar{\delta}(\thetv,\betv) &=[\nabla_{\betv^{t}}\bar{\delta}(\thetv,\betv)^{\T}, \nabla_{\betv^{r}}\bar{\delta}(\thetv,\betv)^{\T}]^{\T}. 
		\end{align}
		\begin{subequations}
			\begin{align}
				\nabla_{\betv^{t}}\bar{\delta}(\thetv,\betv)&=\begin{cases}
					2\nu\Re\bigl\{\diag\bigl(\mathbf{A}_{k}\herm\diag(\btheta^{t})\bigr)\bigr\} & w_{k}=t\\
					0 & w_{k}=r
				\end{cases}\label{derivbeta_t},\\
				\nabla_{\betv^{r}}\bar{\delta}(\thetv,\betv)&=\begin{cases}
					2\nu\Re\bigl\{\diag\bigl(\mathbf{A}_{k}\herm\diag(\btheta^{r})\bigr)\bigr\} & w_{k}=r\\
					0 & w_{k}=t
				\end{cases}\label{derivbeta_r}.
			\end{align}
		\end{subequations}
	\end{proposition}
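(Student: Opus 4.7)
The plan is to view the DE expression $\bar{\delta}=\tr(\bR\bT)$ from Theorem~\ref{TheoremDetTau} as an implicitly defined scalar whose only dependence on $\thetv$ and $\betv$ enters through the single scalar $\alpha(\thetv,\betv)=\tr(\bR_{\mathrm{RIS}}\bPhi_{w_{k}}\tilde{\bR}_{\mathrm{RIS}}\bPhi_{w_{k}}^{\H})$. Indeed, $\bR=\hat{\beta}\,\alpha\,\bR_{\mathrm{BS}}$, so the ordinary chain rule yields $\nabla_{\thetv}\bar{\delta}=(d\bar{\delta}/d\alpha)\,\nabla_{\thetv}\alpha$ and $\nabla_{\betv}\bar{\delta}=(d\bar{\delta}/d\alpha)\,\nabla_{\betv}\alpha$. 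The proof therefore decouples into two independent tasks: (i) isolating the scalar factor $\nu\triangleq d\bar{\delta}/d\alpha$, which will match the expression in the statement; and (ii) evaluating the complex/real componentwise derivatives of the quadratic form $\alpha$.

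For step (i), I would differentiate both sides of the fixed point $\bar{\delta}=\tr(\bR\bT)$ with respect to $\alpha$, using the standard matrix identity $d\bT=-\bT(d\bT^{-1})\bT$ together with the explicit form $\bT^{-1}=\tfrac{c}{1+\bar{\delta}}\bR+\xi\Id_{M}$, where $c=(\bar{\kappa}+1)K+\bar{\kappa}$. This produces a scalar linear equation of the form $(1-A)\,\tfrac{d\bar{\delta}}{d\alpha}=B$, in which $A=\tfrac{c}{(1+\bar{\delta})^{2}}\tr(\bR\bT\bR\bT)$ encodes the self-coupling introduced by the factor $(1+\bar{\delta})^{-1}$ appearing inside $\bT$, while $B$ collects the direct contributions $\hat{\beta}\tr(\bR_{\mathrm{BS}}\bT)$ and $-\tfrac{c\hat{\beta}}{1+\bar{\delta}}\tr(\bR\bT\bR_{\mathrm{BS}}\bT)$. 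Solving $\nu=B/(1-A)$ and rearranging reproduces the announced expression.

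For step (ii), a direct expansion gives $\alpha=\sum_{l,i}[\bR_{\mathrm{RIS}}]_{li}[\tilde{\bR}_{\mathrm{RIS}}]_{il}\beta_{i}^{w_{k}}\theta_{i}^{w_{k}}\beta_{l}^{w_{k}}\bar{\theta}_{l}^{w_{k}}$, from which Wirtinger differentiation in $\theta_{l}^{w_{k}}$ produces $\partial\alpha/\partial\bar{\theta}_{l}^{w_{k}}=\beta_{l}^{w_{k}}[\bA_{w_{k}}]_{ll}$ with $\bA_{i}=\bR_{\mathrm{RIS}}\bPhi_{i}\tilde{\bR}_{\mathrm{RIS}}$. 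The real derivative in $\beta_{l}^{w_{k}}$ yields two symmetric terms that collapse, via the Hermitian symmetry of $\bR_{\mathrm{RIS}}$ and $\tilde{\bR}_{\mathrm{RIS}}$, into $\partial\alpha/\partial\beta_{l}^{w_{k}}=2\Re\{\bar{\theta}_{l}^{w_{k}}[\bA_{w_{k}}]_{ll}\}$. Stacking these componentwise identities and applying the chain rule with the factor $\nu$ gives the compact matrix forms $\nu\diag(\bA_{w_{k}}^{\H}\diag(\betv^{w_{k}}))$ and $2\nu\Re\{\diag(\bA_{w_{k}}^{\H}\diag(\thetv^{w_{k}}))\}$. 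Because $\bPhi_{w_{k}}$ is built only from the coefficients of mode $w_{k}$, the partial derivatives with respect to the coefficients of the opposite mode vanish identically, which is exactly the case distinction in \eqref{derivtheta_t}--\eqref{derivbeta_r}.

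The main obstacle is the implicit differentiation in step (i): because $\bar{\delta}$ appears inside $\bT$ through $(1+\bar{\delta})^{-1}$, its total derivative with respect to $\alpha$ couples back to itself and must be collected carefully on the left-hand side before the closed-form $\nu$ can be read off. Once this self-coupling is resolved, the remaining manipulations are matrix-calculus bookkeeping, and the Hermitian symmetry of the correlation matrices is what allows the real-valued amplitude derivative to be written as a single $\Re\{\cdot\}$ of a diagonal extraction.
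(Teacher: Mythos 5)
Your proposal is correct and follows essentially the same route as the paper's Appendix~D: the paper applies the implicit function theorem to $g(\bar{\delta},\bPhi_{t})=\bar{\delta}-\tr(\bR\bT)$, computes the matrix differentials $d(\bR)$ and $d(\bT)$, and forms $\nu$ as the ratio $-\nabla_{\thetv^{t}}g/\nabla_{\bar{\delta}}g$, which is exactly your $(1-A)\,d\bar{\delta}/d\alpha=B$ relation written in a different order. Your factorization through the scalar $\alpha=\tr(\bR_{\mathrm{RIS}}\bPhi_{w_{k}}\tilde{\bR}_{\mathrm{RIS}}\bPhi_{w_{k}}^{\H})$ is a tidy reorganization of the same chain rule (and your explicit minus sign on the $\tr(\bR\bT\bR_{\mathrm{BS}}\bT)$ term from $d(\bX^{-1})=-\bX^{-1}d\bX\,\bX^{-1}$ is the careful version of the paper's Eq.~\eqref{dt}), but it does not constitute a genuinely different argument.
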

	
	\begin{proof}
		Please see Appendix~\ref{prop2}.	
	\end{proof}
	\textcolor{black}{	\subsubsection*{Complexity Analysis of Algorithm \ref{Algoa1}}
		Regarding the complexity analysis, we consider the big-O notation, which is appropriate since we assume  large $M$ and $N$ in this work. At each iteration of Algorithm \ref{Algoa1}, we focus  on  the objective and its gradient value.  First, we focus on $ \bR $, which requires $O(N^2+M^2)$ complex multiplications because $\bPhi_{w_{k}}$ is diagonal. Also, $\bPsi_{k}$ requires   $O(M^3)$ to compute it due to the calculation of the involving matrix inversion. Hence, the  complexity to compute $\bar{\delta}^{*}(\thetv,\betv)$ is $O(K(M^3+N^2))$. 	Next, we focus on the complexity of $\nabla_{\thetv}\bar{\delta}^{*}(\thetv,\betv)$. We observe that by following the above analysis, it is easy to see that  the complexity of computing  the gradients for each iteration is $O(K(M^3+N^2))$. Hence, both the objective and its gradient have the same complexity.}
	
	In the case of the MS protocol, the difference is that the amplitude values are forced to be binary, i.e.,  $\betv^{t}_n\in \{ 0, 1\} $ and $\betv^{r}_n\in \{ 0, 1\} $.  The solution to the  binary constraints belongs to the class of binary nonconvex programming, which is generally NP-hard, which means that it is quite difficult. A practical approach is to find a high-performing solution. Thus, we obtain  a simple solution  by rounding off the solution corresponding to this optimization problem to the nearest binary value, and the result is a reasonably good performance as can be seen in the next section.  In future work, we are going to focus on more advanced methods for solving this optimization problem.
	
	\section{Numerical Results}\label{Numerical}
	In this section, we present the numerical results of the minimum SINR in  STAR-RIS-aided systems. Specifically, we depict analytical results and Monte-Carlo (MC) simulations with $ 10^{3} $ independent channel realizations. 
	
	The simulation setup consists of a STAR-RIS deployed with a UPA of $ N=64 $ elements, which aids the communication between a BS equipped with a ULA of $ M =64$ antennas   that serves $ K = 4 $ UEs. The locations of the BS and RIS, i.e., their  $xy-$coordinates are given as $(x_B,~ y_B) = (0,~0)$ and $(x_R,~ y_R)=(50,~ 10)$, respectively, all in meter units.  Also, UEs in $r$ region are located on a straight line between $(x_R-\frac{1}{2}d_0,~y_R-\frac{1}{2}d_0)$ and $(x_R+\frac{1}{2}d_o,~y_R-\frac{1}{2}d_0)$ with equal distances between each two adjacent UEs. Note that we  have used  $d_0 = 20$~m in our simulations. In a similar way, UEs in the $t$ region are located between $(x_R-\frac{1}{2}d_0,~y_R+\frac{1}{2}d_0)$ and $(x_R+\frac{1}{2}d_o,~y_R+\frac{1}{2}d_0)$. The size of each STAR-RIS element is $ d_{\mathrm{H}}\!=\!d_{\mathrm{V}}\!=\!\lambda/4 $.   Moreover, we have considered a distance-based path-loss, such that the channel gain of a given link $j$ is $\tilde \beta_j = A d_j^{-\alpha_{j}}$, where $A=d_{\mathrm{H}}d_{\mathrm{V}}$ is the area of each reflecting element at the STAR-RIS, and $\alpha_{j}$ is the path-loss exponent. Herein, we have assumed that $\alpha_{r}=2.5$ and $\alpha_{t}=3$ since the UEs in the transmission region are expected to suffer from more severe path loss. Similarly, for $ \tilde \beta $ Regarding the correlation matrices $ \bR_{\mathrm{BS}}$ and $\bR_{\mathrm{RIS}} $, they are  computed according to \cite{Hoydis2013} and \cite{Bjoernson2020}, respectively.  In addition, $ \sigma^2=-174+10\log_{10}B_{\mathrm{c}} $, where $B_{\mathrm{c}}=200~\mathrm{kHz}$ is the bandwidth.
	
	\begin{figure}[!h]
		\begin{center}
			\includegraphics[width=0.8\linewidth]{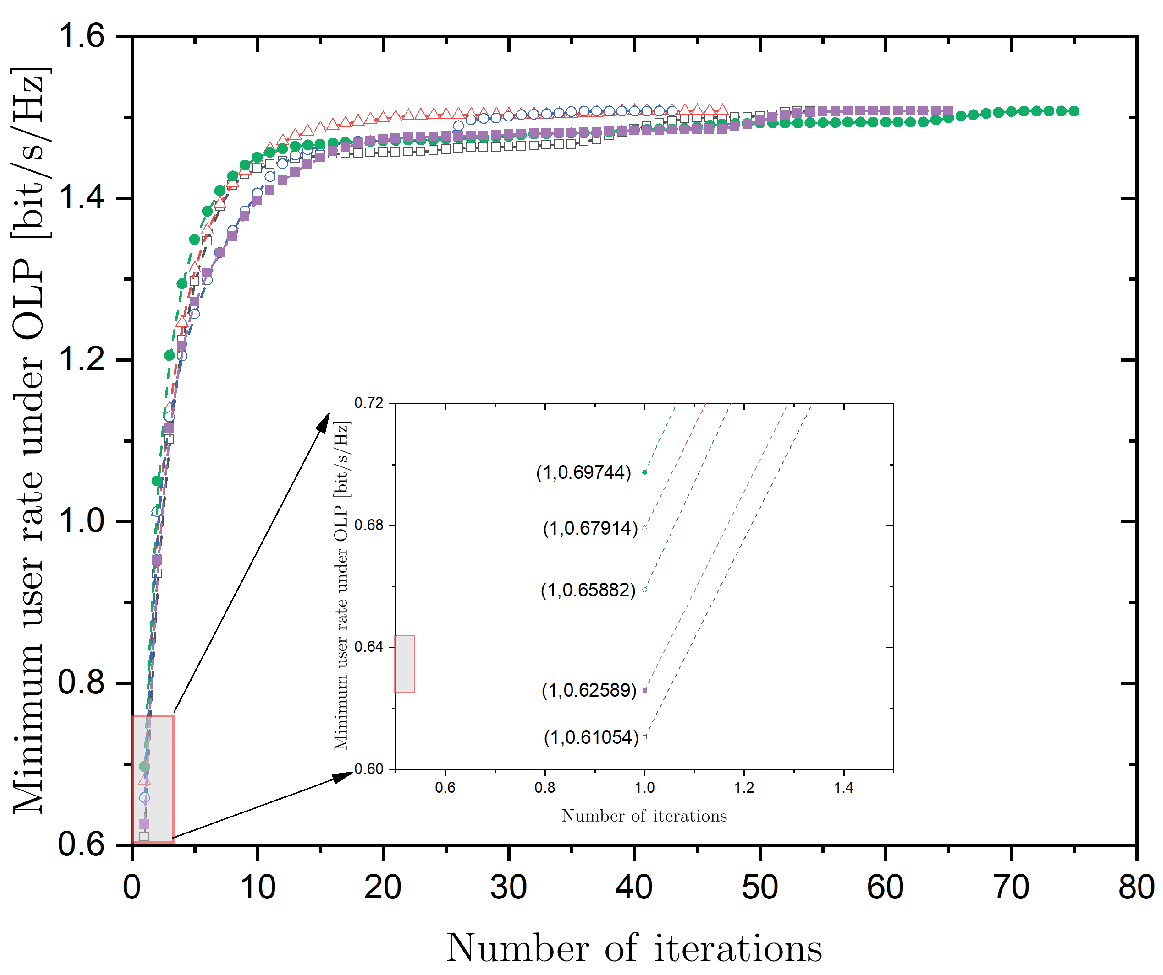}
			\caption{\footnotesize{ Convergence of Algorithm \ref{Algoa1} for an STAR-RIS assisted MIMO system ($M=64$, $ N=64 $, $ K=4 $).  }}
			\label{Fig2}
		\end{center}
	\end{figure}
	
	In Fig. \ref{Fig2}, we show the  convergence of the proposed PGAM. In particular, we depict the minimum user rate against the iteration count obtained by Algorithm \ref{Algoa1} for 5 different randomly generated initial points. The Algorithm \ref{Algoa1} terminates when the  difference of the objective between the two last iterations is less than $10^{-5}$ or the number of iterations is larger than $200$. Given that the optimization problem $ (\mathcal{P}3) $ is nonconvex, the proposed  PGAM can only guarantee a stationary solution, which is not necessarily optimal. As a consequence, Algorithm \ref{Algoa1} may converge to different points starting from different initial points, which is clearly seen in Fig.~\ref{Fig2}. Also, we observe that different initial points result in different convergence rates. For this reason, i.e., for the mitigation of this performance sensitivity of Algorithm  \ref{Algoa1} on the initial points, it is required to execute it from different initial points and select the best convergent solutions. In our extensive simulations, we have followed this observation to allow  a good trade-off between complexity and minimum user rate, i.e., we have executed  Algorithm  \ref{Algoa1} for 5 randomly generated initial points.
	
	\begin{figure}[!h]
		\begin{center}
			\includegraphics[width=0.8\linewidth]{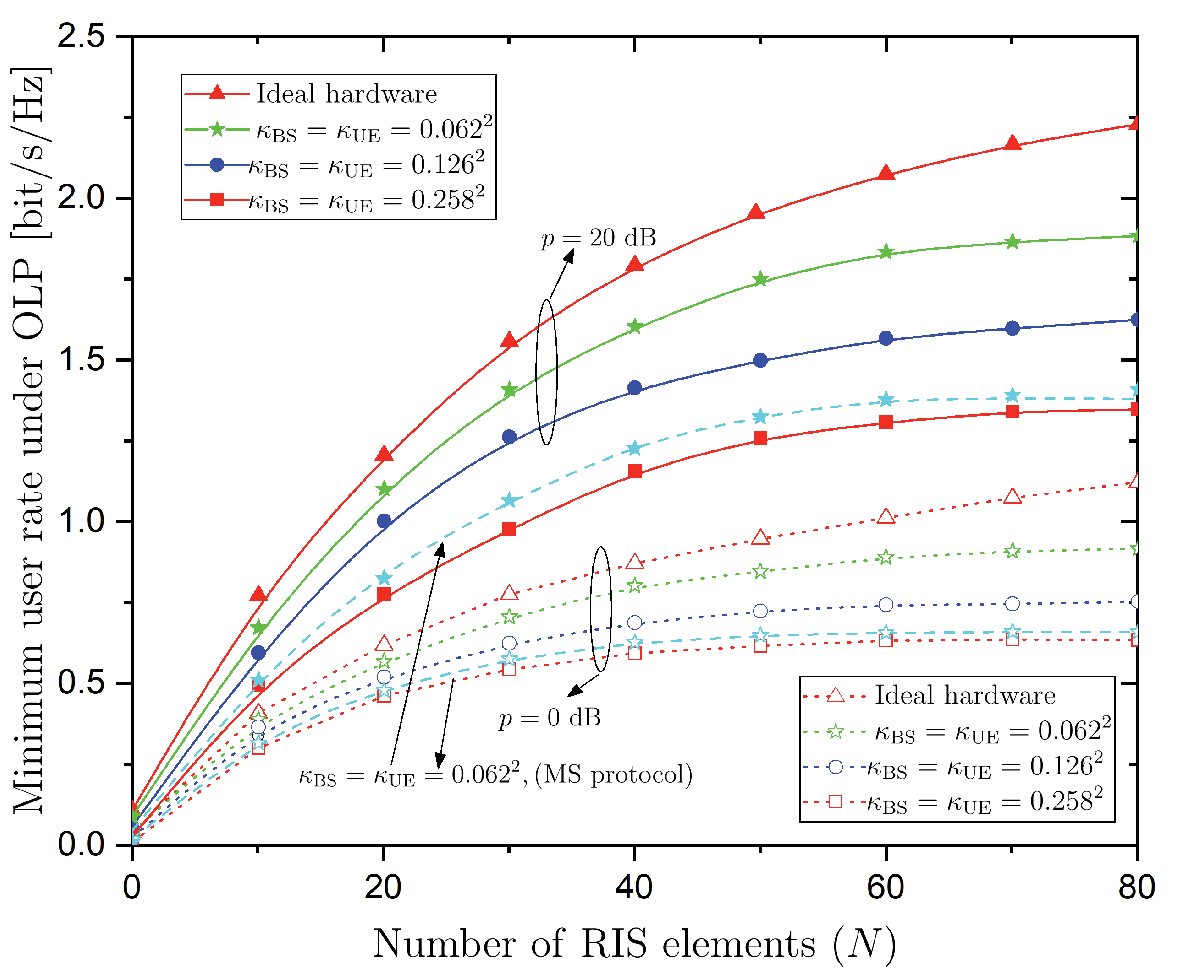}
			\caption{\footnotesize{ Downlink  minimum  user rate   versus the number of RIS elements $N$ of a STAR-RIS  assisted MIMO system with ES  and MS protocols ($ M=16 $, $ K=4 $) for varying T-HWIs $ \kappa_{\mathrm{BS}} $, $\kappa_{\mathrm{UE}}$ and transmit power $ p $.  }}
			\label{Fig3}
		\end{center}
	\end{figure}
	
	In Fig. \ref{Fig3},   we illustrate the minimum rate  versus the  number of elements $ N $ for two distinct SNR values, $ p=0~\mathrm{dB} $ and $ p=20~\mathrm{dB} $. Moreover,  we have shown different values of T-HWIs,  while we have assumed no phase noise. First, we notice that $ \bar{\delta} $, generally, increases with $ N $, but in the case of an imperfect transceiver, it increases in  lower values than in the case of perfect hardware. Also, the higher SNR group presents saturation faster since T-HWIs are power-dependent. Note that  severe T-HWIs,  implying cheaper hardware, result in lower $ \bar{\delta} $. Hence, a trade-off appears between the hardware quality and the performance i.e., the better the hardware quality,  the better the performance. Furthermore,   we observe that the largest percentage of the gain is achieved at a lower number of elements.

	\begin{figure}[!h]
		\begin{center}
			\includegraphics[width=0.8\linewidth]{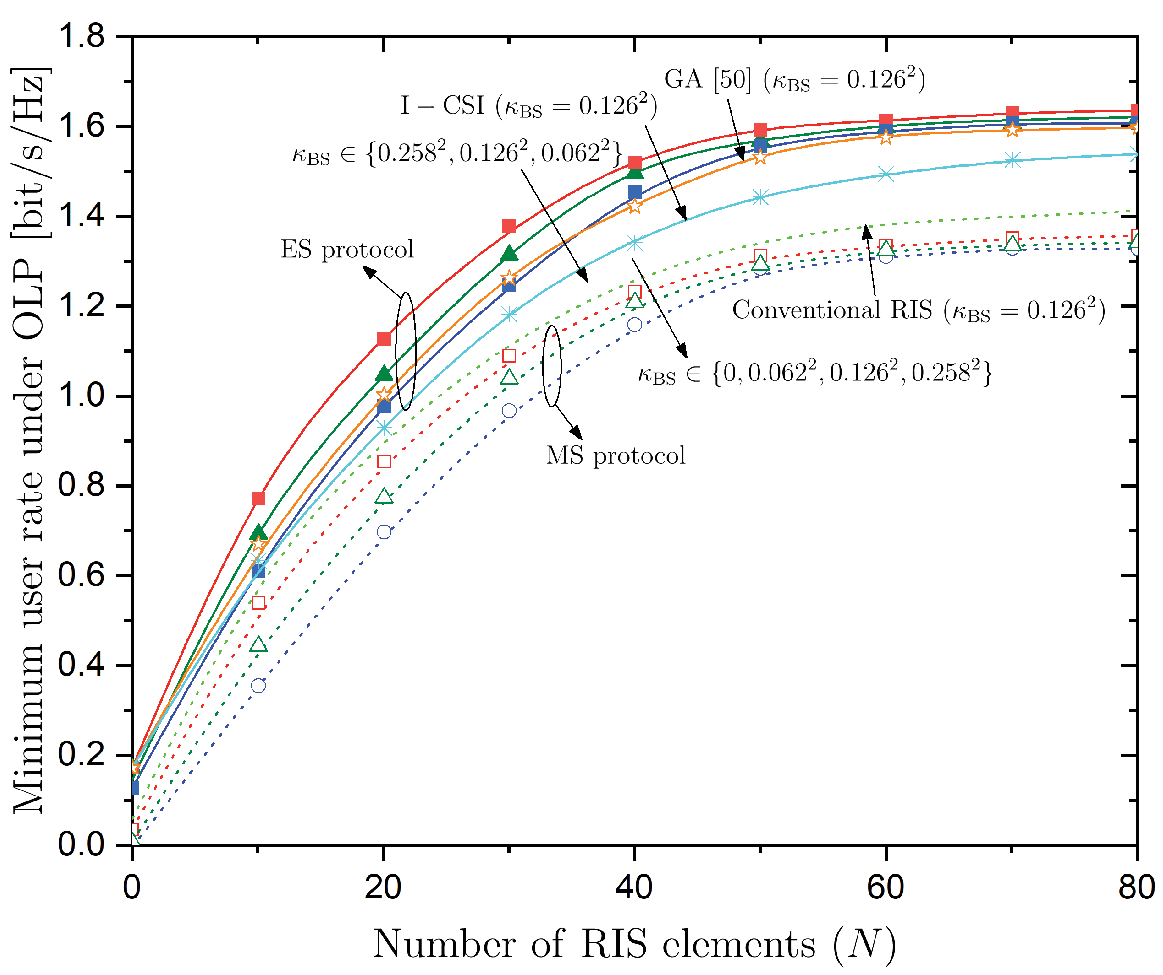}
			\caption{\footnotesize{\textcolor{black}{ Downlink  minimum  user rate  versus the number of RIS elements $N$ of a STAR-RIS  assisted MIMO system ($ M=16 $, $ K=4 $) for varying BS distortion $ \kappa_{\mathrm{BS}} $, both cases of ES and MS protocols, conventional RIS, I-CSI, GA.} }}
			\label{Fig4}
		\end{center}
	\end{figure}
	
	Fig. \ref{Fig4} depicts the minimum rate versus the number of elements $ N $ while we vary only the severity of the  distortion $ \bkappa_{\mathrm{BS}} $, and having  a distortion at the UE $ \bkappa_{\mathrm{UE}} $ set to zero. We observe that the lines converge at a specific value as $ N $ increases. This signifies that the impact of $ \bkappa_{\mathrm{BS}} $ becomes imperceptible when $ N $ is large. In other words, a larger RIS is  preferred since it allows the use of lower quality, i.e., cheaper hardware. Furthermore, in the same figure,  we depict a comparison between the ES and MS protocols.\footnote{\textcolor{black}{In the case of the MS protocol, we have rounded off the solution corresponding to the 		optimization problem to the nearest binary value, and we have found that its impact is minor. Hence, this method is suggested in this case.}} We observe  that, in both cases, the minimum rate increases with the number of RIS elements. The impact of $ \bkappa_{\mathrm{BS}} $ becomes insignificant in the case of the  MS protocol but at larger $ N $. Evidently, in the case of the MS protocol, the minimum rate is slower. \textcolor{black}{ Also, we provide  a comparison with respect to conventional RIS, which presents lower performance compared to STAR-RIS since the latter includes the double number of variables. In addition, for further comparison, we compare the proposed scheme based on S-CSI with full I-CSI. Although I-CSI  yields higher SINR, it results in a lower net sum-rate than  our proposed S-CSI scheme due to the large overhead associated with the full I-CSI acquisition. The figure reveals the superiority of the former. Specifically, we estimate the I-CSI based on the approach in \cite{Nadeem2020}, which has a high training overhead, and maximize the sum-rate expression using the GA  described in \cite{Peng2021} with the same parameters. 			  Also, we compare the proposed optimization, which is the gradient ascent with the GA in \cite{Peng2021}. We observe that the former is preferable because of its better performance and because it provides closed-form expressions.}
	
	\begin{figure}[!h]
		\begin{center}
			\includegraphics[width=0.8\linewidth]{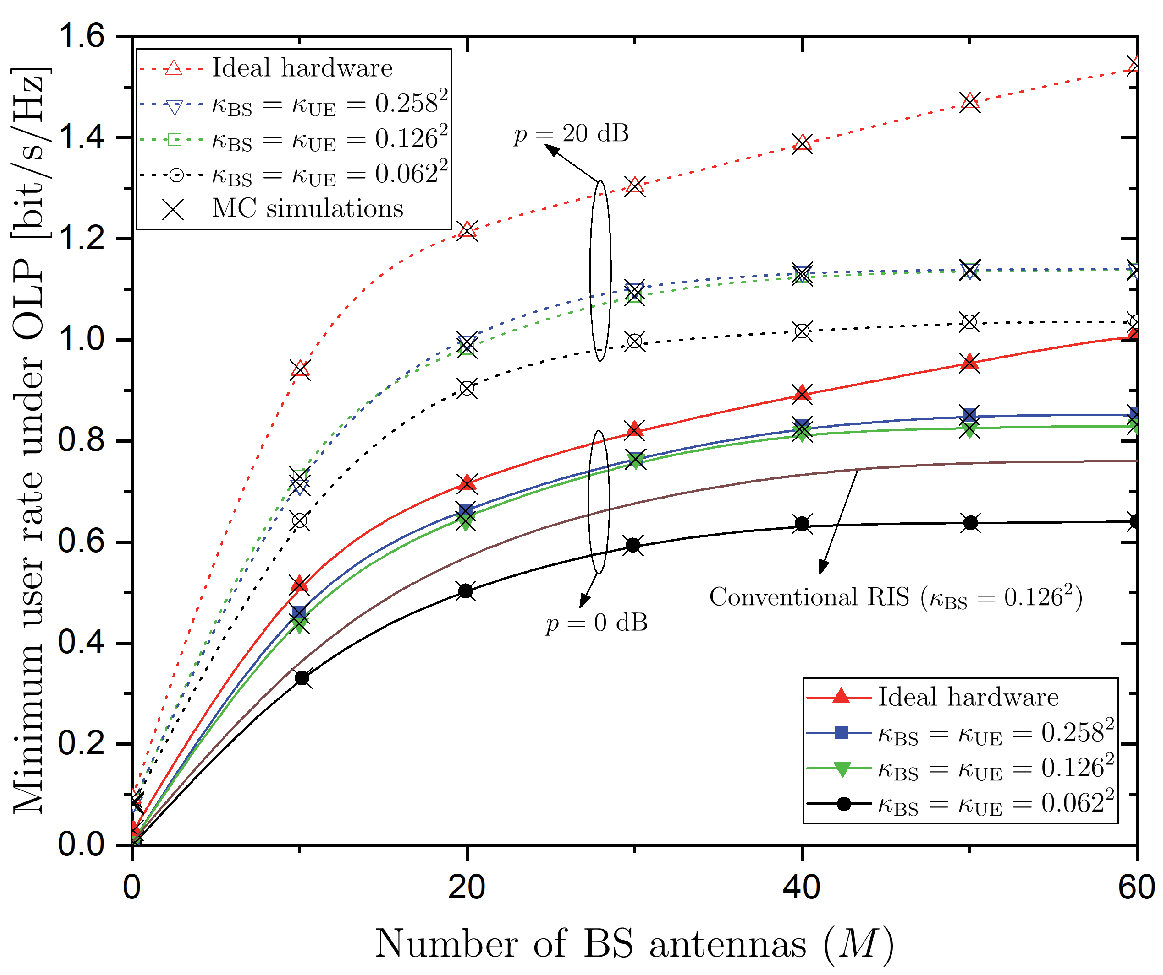}
			\caption{\footnotesize{ Downlink  minimum  user rate  versus the number of BS antennas $M$ of  a STAR-RIS  assisted MIMO system  ($ N=60 $, $ K=4 $) for varying T-HWIs $ \kappa_{\mathrm{BS}} $, $\kappa_{\mathrm{UE}}$ and transmit power $ p $ (Analytical results and MC simulations). }}
			\label{Fig5}
		\end{center}
	\end{figure}
	
	Fig. \ref{Fig5} shows the minimum rate versus the number of BS antennas $ M $ for varying SNR and T-HWI values. As can be seen, this figure resembles Fig. \ref{Fig3}, i.e.,  both figures present a similar dependence on $ M $ and $ N $. The rate appears a ceiling in the case of imperfect hardware where T-HWIs exist, while it grows unbounded as $ M $ increases. We notice a large degradation in the performance when the hardware quality is lower. Moreover, at higher SNR,  the rate is larger but it saturates faster. \textcolor{black}{Note also the superiority of STAR-RIS compared to conventional RIS systems.} In other words, these figures suggest that STAR-RIS-assisted systems perform better at larger values of BS antennas and  RIS  elements. Notably, the analytical results are compared with MC simulations and not only verify the lower bounds but also show their  tightness.
	
	\begin{figure}[!h]
		\begin{center}
			\includegraphics[width=0.8\linewidth]{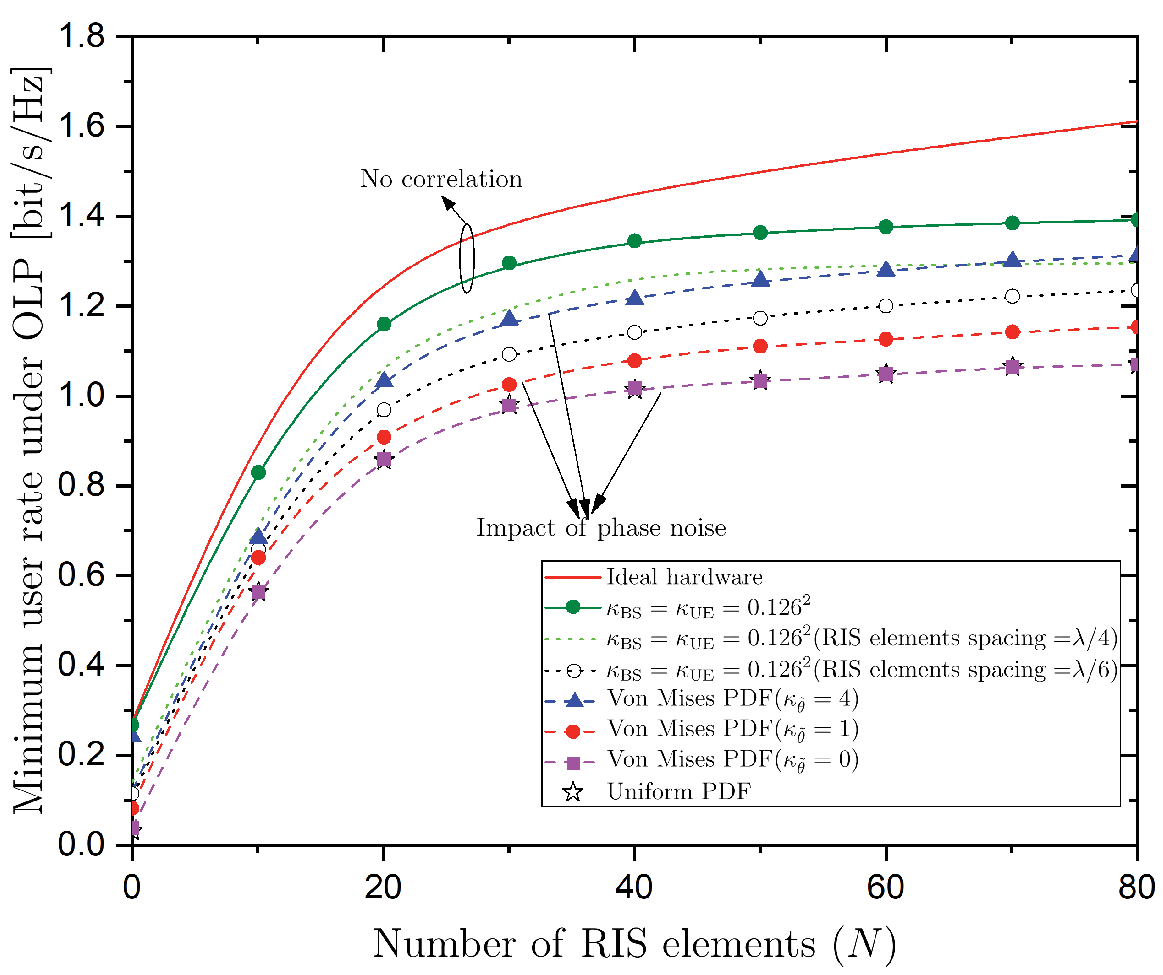}
			\caption{\footnotesize{ Downlink  minimum  user rate  versus the number of RIS elements $N$ of a STAR-RIS assisted MIMO system ($ M=16 $, $ K=4 $) for varying RIS-HWIs and T-HWIs in the cases of correlated/non-correlated Rayleigh fading. }}
			\label{Fig6}
		\end{center}
	\end{figure}
	
	Fig. \ref{Fig6} illustrates the impact of correlated Rayleigh fading on the minimum rate versus the number of RIS elements. In particular, we show the impact of correlation when both perfect and imperfect hardware is implemented. Also, we depict the impact of phase noise. When $ \kappa_{\tilde{\theta}}  $ increases,  the rate  increases, while in the case of uniform phase noise, the rate is the lowest since the phase shifts cannot be optimised because $ \bR_{k} $ does not depend on them. In other words, in the case of the Von Mises distribution, the STAR-RIS can be exploited since the corresponding phase shifts can be optimised. When $ \kappa_{\tilde{\theta}}=0 $, the rate coincides with the line corresponding to the uniform PDF, which verifies our result. Moreover, we observe that the lines corresponding to the phase increase with the number of RIS elements, while the line describing the T-HWI saturates.

	\section{Conclusion}\label{Conclusion}
	\textcolor{black}{	In this paper, we have obtained and studied the minimum SINR with OLP and HWIs while  correlated fading and multiple UEs at each side of the STAR-RIS were taken into account. Next, we have derived the   DEs of the above expressions in the large system limit. Based on the DEs and statistical CSI, we  have presented a low-complexity iterative optimization method maximising the minimum SINR, where the amplitude and the phase shift  were optimised simultaneously at each iteration. As a benchmark, we devised a design based on full I-CSI, and we showed that the proposed S-CSI based scheme is superior in terms of the sum rate due to lower training overhead.  \textcolor{black}{The proposed novel optimization methodology  has a low computational cost, which is very beneficial in STAR-RIS-assisted systems that have not only a large number of elements but also a double number of variables compared to reflecting only RIS.} Moreover, we have evaluated the impact of   correlation and HWIs, and have shed  light on their interplay with  other system parameters toward better network performance.}
	\begin{appendices}
		\section{Useful Lemmas}
		\begin{lemma}[Rank-1 perturbation lemma {\cite[Lemma~2.1]{Bai2}}]
			\\Let $z\in<0$, $\bA\in\CC^{N\times N}$, $\bB\in\CC^{N\times N}$ with $\bB$ Hermitian nonnegative definite, and $\bx\in\CC^{N}$. Then,
			\begin{align}
				|\tr\left((\bB-z\Id_N)^{-1} -(\bB+\bx\bx^\H-z\Id_N)^{-1}\bA|\right)\leq\frac{\|\bA\|}{|z|}.\nn
			\end{align}
		\end{lemma}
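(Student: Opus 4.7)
The plan is to invoke the Sherman--Morrison identity to write the resolvent difference explicitly as a rank-one positive semi-definite matrix, and then close the argument with a trace/operator-norm duality inequality. First I would define $\bC \defeq (\bB - z\Id_N)^{-1}$, which is well defined and positive definite since $\bB \succeq \b0$ and $z<0$; moreover every eigenvalue of $\bC$ lies in $(0,\,1/|z|]$, so $\|\bC\| \leq 1/|z|$.

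Applying Sherman--Morrison to the perturbed resolvent yields
\[
(\bB + \bx\bx^\H - z\Id_N)^{-1} = \bC - \frac{\bC\bx\bx^\H\bC}{1 + \bx^\H\bC\bx},
\]
so the difference
\[
\bY \defeq (\bB - z\Id_N)^{-1} - (\bB + \bx\bx^\H - z\Id_N)^{-1} = \frac{\bC\bx\bx^\H\bC}{1 + \bx^\H\bC\bx}
\]
is rank-one and positive semi-definite: the numerator equals $\bu\bu^\H$ with $\bu = \bC\bx$, and the denominator is a real number not smaller than $1$ because $\bC \succ \b0$.

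The key inequality I would then apply is the trace/operator-norm duality $|\tr(\bA\bY)| \leq \|\bA\|\,\tr(\bY)$, valid for any square $\bA$ whenever $\bY$ is positive semi-definite (a consequence of H\"older for Schatten norms, using $\tr(\bY) = \|\bY\|_1$ for $\bY \succeq \b0$). It remains to bound $\tr(\bY) = \bx^\H\bC^2\bx /(1 + \bx^\H\bC\bx)$. Combining the operator inequality $\bC^2 \preceq \|\bC\|\bC$ with $\bx^\H\bC\bx/(1+\bx^\H\bC\bx)\leq 1$ gives $\tr(\bY) \leq \|\bC\| \leq 1/|z|$, which together with the duality inequality yields the claimed bound.

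There is no genuine obstacle: the lemma is essentially an algebraic identity (Sherman--Morrison) followed by two short spectral estimates. The only subtle point is that $\bA$ is allowed to be non-Hermitian, so one should invoke the trace inequality in its Schatten-duality form rather than in the special case of Hermitian $\bA$; apart from that, the argument is entirely routine.
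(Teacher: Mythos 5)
Your argument is correct: the resolvent $\bC=(\bB-z\Id_N)^{-1}$ is indeed positive definite with $\|\bC\|\le 1/|z|$ when $z<0$ and $\bB\succeq \b0$, Sherman--Morrison gives the difference as the rank-one positive semi-definite matrix $\bC\bx\bx^\H\bC/(1+\bx^\H\bC\bx)$, and the Schatten--H\"older step $|\tr(\bA\bY)|\le\|\bA\|\tr(\bY)$ together with $\bC^2\preceq\|\bC\|\bC$ closes the bound. Note, however, that the paper does not prove this lemma at all --- it is imported verbatim from Bai and Silverstein --- so there is no in-paper proof to compare against; your derivation is essentially the standard one from that literature (the cited source bounds the same rank-one correction term $\bx^\H\bC\bA(\bB+\bx\bx^\H-z\Id_N)^{-1}\bx/(1+\bx^\H\bC\bx)$ directly rather than passing through $\tr(\bY)$, but the two routes are equivalent in substance). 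One cosmetic remark: the statement as printed has mismatched delimiters; your proof establishes the intended reading $|\tr(((\bB-z\Id_N)^{-1}-(\bB+\bx\bx^\H-z\Id_N)^{-1})\bA)|\le\|\bA\|/|z|$.
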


		\begin{lemma}\label{lemma:asymptoticLimits}
			Let $\bA \in \bbC^{N \times N}$ with uniformly bounded spectral norm (with respect to $N$). Consider $\bx$, where $\bx \in \bbC^{N}$, $\bx \sim \cC\cN(\b0, \bPhi_{x})$  independent of $\bA$. Then, we have
			\begin{align}
				&\frac{1}{N}\bx^{\H}\bA\bx - \frac{1}{N}\tr \bA\bPhi_{x}  \xrightarrow[ N \rightarrow \infty]{\mbox{a.s.}} 0 \label{eq:oneVector}.			\end{align}
		\end{lemma}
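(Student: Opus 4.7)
The plan is to reduce the claim to the standard Gaussian quadratic-form concentration result and then invoke Borel--Cantelli. First, I would whiten the quadratic form: write $\bx=\bPhi_{x}^{1/2}\by$ with $\by\sim\cC\cN(\b0,\Id_{N})$, so that
\begin{align}
\tfrac{1}{N}\bx^{\H}\bA\bx=\tfrac{1}{N}\by^{\H}\bB\by,\qquad \bB:=\bPhi_{x}^{1/2}\bA\bPhi_{x}^{1/2},
\end{align}
and $\tfrac{1}{N}\tr(\bA\bPhi_{x})=\tfrac{1}{N}\tr(\bB)$ by cyclicity. Because $\bPhi_{x}$ is a fixed covariance matrix and $\|\bA\|$ is uniformly bounded by hypothesis, $\bB$ also has uniformly bounded spectral norm, say $\|\bB\|\le C$; moreover $\bB$ is independent of $\by$ since $\bA\perp\!\!\!\perp\bx$ and $\bPhi_{x}$ is deterministic.

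Next I would show $\tfrac{1}{N}(\by^{\H}\bB\by-\tr\bB)\xrightarrow{a.s.}0$. Conditioning on $\bB$, the Isserlis/Wick identity immediately yields the mean $\EE[\by^{\H}\bB\by\mid\bB]=\tr\bB$ and, via Gaussian hypercontractivity, the $2p$-th central-moment bound
\begin{align}
\EE\bigl[\,|\by^{\H}\bB\by-\tr\bB|^{2p}\,\big|\,\bB\bigr]\le C_{p}\,\|\bB\|_{F}^{2p}\le C_{p}\,N^{p}\|\bB\|^{2p}.
\end{align}
Dividing by $N^{2p}$ and taking $p\ge 2$, we obtain $\EE|\tfrac{1}{N}(\by^{\H}\bB\by-\tr\bB)|^{2p}=O(N^{-p})$, which is summable in $N$. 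By Markov's inequality and the Borel--Cantelli lemma, $\tfrac{1}{N}\by^{\H}\bB\by-\tfrac{1}{N}\tr\bB\xrightarrow{a.s.}0$, which is exactly \eqref{eq:oneVector} after substituting back $\bB=\bPhi_{x}^{1/2}\bA\bPhi_{x}^{1/2}$.

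The only subtle point is handling the case when $\bA$ is itself random: the moment bound above holds conditionally on $\bA$ with a constant depending only on the uniform norm bound, so Fubini and the uniformity in $\bA$ let us lift the a.s.\ convergence to the joint probability space. Beyond this bookkeeping, there is no real obstacle; the statement is essentially the classical Bai--Silverstein trace lemma (\textit{cf.}\ Lemma~B.26 in their monograph), and the proof can be concluded by citing that reference once the whitening step above is carried out.
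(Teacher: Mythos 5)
Your proof is correct, but note that the paper itself offers no proof of this lemma: it is imported verbatim as the classical Bai--Silverstein trace lemma (Lemma~B.26 of their monograph) and used as a black box in the appendices, so there is no in-paper argument to compare against. Your derivation is the canonical one for that result --- whiten via $\bx=\bPhi_{x}^{1/2}\by$, bound the $2p$-th conditional central moment of the Gaussian quadratic form by $C_{p}\norm{\bB}_{F}^{2p}\le C_{p}N^{p}\norm{\bB}^{2p}$, and conclude with Markov plus Borel--Cantelli --- and every step goes through. Two minor points worth flagging. First, the whitening step needs $\norm{\bPhi_{x}}$ to be uniformly bounded in $N$ as well; the lemma as stated only bounds $\norm{\bA}$, so strictly speaking the claim can fail if the covariance spectral norm grows (take $\bA=\Id_{N}$ and $\bPhi_{x}$ with one eigenvalue of order $N$). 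This hypothesis is standard and implicit throughout this literature (cf.\ Assumption~2 of the paper, which bounds $\norm{\bR_{\mathrm{RIS}}}$ and $\norm{\bR_{\mathrm{BS}}}$), but your write-up should state it. Second, the phrase ``via Gaussian hypercontractivity'' is doing real work; for a referee you would either spell out the Wick/Isserlis computation of the $2p$-th moment or simply cite the trace lemma directly, which --- as you note at the end --- is the intended resolution and is exactly what the paper does.
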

		
		\begin{lemma}{\cite[Lemma~5]{Kammoun2020}}
			We  denote $ \bQ=(\frac{1}{K}\sum_{i=1}^{K}b_{i}\bh_{i}\bh_{i}^{\H}+\sigma^{2}\Id_{M}) $ and $ \bQ_{k}=(\frac{1}{K}\sum_{i\ne k}^{K}b_{i}\bh_{i}\bh_{i}^{\H}+\sigma^{2}\Id_{M}) $. Let $ m(\sigma^{2}) $ be the unique solution to
			\begin{align}
				m(\bA\sigma^{2})=\frac{1}{K}\tr \bA(\frac{1}{K}\sum_{j=1}^{K}\frac{b_{j}\bR_{j}}{1+e_{j}(\sigma^{2})}+\sigma^{2}\Id_{M}),
			\end{align}
			where $ e_{k}(\sigma^{2}) $ is the unique solution to  $ e_{k}(\sigma^{2})=\frac{1}{K} \tr b_{k}\bR_{k}(\frac{1}{K}\sum_{j=1}^{K}\frac{b_{j}\bR_{j}}{1+e_{j}(\sigma^{2})}+\sigma^{2}\Id_{M}) $.
			
			Consider the asymptotic regime outlined in Assumption 3. Let $ [a,b] $ be a closed bounded interval in $ [0,\infty] $, then the
			following convergence holds true 
			\begin{align}
				\sup_{\sigma^{2}\in [a,b]}|\frac{1}{K}\tr\bQ(\sigma^{2}-m(\Id_{M},\sigma^{2}))|\xrightarrow[ N \rightarrow \infty]{\mbox{a.s.}} 0.
			\end{align}
			Using the rank-one perturbation lemma  $ 	\sup_{\sigma^{2}\in [a,b]}|\frac{1}{K}\tr\bQ_{k}(\sigma^{2}-m(\Id_{M},\sigma^{2}))|\xrightarrow[ N \rightarrow \infty]{\mbox{a.s.}} 0. $
		\end{lemma}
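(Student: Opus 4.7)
The plan is to apply the standard deterministic equivalent framework for resolvents of sample-covariance-type matrices with correlated generators, combining the trace concentration result of Lemma~2 with the rank-1 perturbation inequality of Lemma~1 from this appendix.

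First, I would establish that the coupled fixed-point system defining $\{e_k(\sigma^2)\}$ admits a unique positive solution for every $\sigma^2 > 0$. Denoting $\bT(\sigma^2) \triangleq \bigl(\frac{1}{K}\sum_{j=1}^K \frac{b_j \bR_j}{1 + e_j(\sigma^2)} + \sigma^2 \Id_M\bigr)^{-1}$, the map $(e_k) \mapsto (\frac{1}{K}\tr b_k \bR_k \bT)$ is a standard interference function (positivity, monotonicity, scalability), so fixed-point iteration converges to the unique positive solution; this ensures that $m(\Id_M, \sigma^2) = \frac{1}{K}\tr \bT(\sigma^2)$ is well-defined on $[a,b]$ (taking $a>0$).

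Second, for each fixed $\sigma^2$, I would prove pointwise a.s.\ convergence $\frac{1}{K}\tr(\bQ^{-1} - \bT) \to 0$. The resolvent identity gives $\bQ^{-1} - \bT = \bT(\bT^{-1} - \bQ)\bQ^{-1}$ with
\begin{align}
\bT^{-1} - \bQ = \frac{1}{K}\sum_{k=1}^K \Bigl(\frac{b_k \bR_k}{1 + e_k} - b_k \bh_k \bh_k^\H\Bigr).
\end{align}
Invoking the Woodbury identity $\bh_k^\H \bQ^{-1} = \bh_k^\H \bQ_k^{-1}/(1 + (b_k/K)\bh_k^\H \bQ_k^{-1}\bh_k)$, then Lemma~2 to replace $\frac{1}{K}\bh_k^\H \bQ_k^{-1}\bh_k$ by $\frac{1}{K}\tr \bR_k \bQ_k^{-1}$, and Lemma~1 to swap $\bQ_k^{-1}$ for $\bQ^{-1}$, one obtains a self-consistent relation forcing $\frac{1}{K}\tr \bR_k \bQ^{-1} - e_k \to 0$, hence $\frac{1}{K}\tr \bQ^{-1} - m(\Id_M, \sigma^2) \to 0$ almost surely.

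To upgrade to uniform convergence on $[a,b]$, I would derive deterministic Lipschitz bounds in $\sigma^2$ for both $\frac{1}{K}\tr \bQ^{-1}(\sigma^2)$ (using $\|\frac{d}{d\sigma^2}\bQ^{-1}\| = \|\bQ^{-2}\| \le 1/a^2$) and for $m(\sigma^2)$ (by differentiating the fixed-point equations). Combining these with a.s.\ pointwise convergence on a countable dense subset of $[a,b]$ via an $\epsilon$-net argument delivers the uniform a.s.\ statement, and the $\bQ_k$ version is immediate from Lemma~1. I expect the main obstacle to be the bookkeeping of the coupled concentration chain: Lemma~2 controls each quadratic form only to $o(1)$, so one must aggregate $K$ such deviations through the Woodbury expansion without error accumulation, while simultaneously producing Lipschitz constants that are deterministic and valid up to the boundary of $[a,b]$.
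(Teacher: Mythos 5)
The paper itself offers no proof of this lemma: it is imported verbatim from \cite[Lemma~5]{Kammoun2020}, and the only argumentative content the authors supply is the closing one-line remark that the $\bQ_{k}$ variant follows from the rank-one perturbation lemma --- a step you also take, in the same way. So the comparison is really between your reconstruction and the standard deterministic-equivalent proof that underlies the cited result, and on that score your outline is the canonical one: uniqueness of the fixed point $\{e_{k}\}$ via the standard-interference-function properties, the resolvent identity $\bQ^{-1}-\bT=\bT(\bT^{-1}-\bQ)\bQ^{-1}$ combined with the matrix inversion lemma, the trace lemma to replace $\frac{1}{K}\bh_{k}\herm\bQ_{k}^{-1}\bh_{k}$ by $\frac{1}{K}\tr\bR_{k}\bQ_{k}^{-1}$, the rank-one perturbation lemma to pass between $\bQ_{k}^{-1}$ and $\bQ^{-1}$, and a Lipschitz-plus-dense-subset argument for uniformity on $[a,b]$ with $a>0$. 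You also correctly repaired the typographical defects in the statement as printed (the missing matrix inverses in the definitions of $m$ and $e_{k}$, and the garbled difference $\frac{1}{K}\tr\bQ^{-1}(\sigma^{2})-m(\Id_{M},\sigma^{2})$), which is necessary for the lemma to be meaningful.

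Two points in your sketch deserve more than an acknowledgement if this were to be a complete proof. First, the aggregation issue you flag is real: the trace lemma as stated gives only almost-sure convergence of a single quadratic form, whereas the sum $\frac{1}{K}\sum_{k=1}^{K}(\cdot)$ requires controlling $K$ simultaneous deviations; the standard fix is a moment bound of the form $\EE\abs{\frac{1}{N}\bx\herm\bA\bx-\frac{1}{N}\tr\bA\bPhi_{x}}^{p}\le C_{p}N^{-p/2}$ for some $p>2$ together with Borel--Cantelli and a union bound, and without it the argument does not close. Second, your step from ``a self-consistent relation'' to ``$\frac{1}{K}\tr\bR_{k}\bQ^{-1}-e_{k}\to 0$'' needs a stability estimate for the fixed-point system, i.e.\ that an $o(1)$-approximate solution of the $e_{k}$ equations is $o(1)$-close to the exact solution; uniqueness alone does not give this, and the usual device is a contraction bound on the fixed-point map (or the monotonicity machinery of the interference-function framework). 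Neither issue is a wrong turn --- both are resolved by standard lemmas in the random matrix literature --- but they are the two places where the bookkeeping you worry about actually bites.
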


		\section{Proof of Theorem~\ref{TheoremOLP}}\label{TheoremOLPproof}
		
		Given that $ C_{ki}=|\bh_{k}^{\H}\bff_{i}|^{2}>0 $ $ \forall k,i $, at the optimal solution to $ 	(\mathcal{P}1) $, all weighted SINRs are equal, i.e., $ \frac{\gamma_{k}\left( \bp, \{\thetv,\betv\}\right)}{a_{k}} $ are equal $ \forall k $. For the sake of exposition, we define the normalized priority vector $ \hat{\ba}\in \mathbb{R}^{K\times 1}_{++} $ and the cross-channel interference matrix $ \bZ\in \mathbb{R}_{+}^{K \times K} $ as
		\begin{align}
			\hat{\ba}&=(\frac{a_{1}}{C_{11}}, \ldots, \frac{a_{K}}{C_{KK}})^{\T},\\
			Z_{ki}&=\left\{
			\begin{array}{lll}
				(					\kappa_{\mathrm{BS}}^{2}+\kappa_{\mathrm{UE}}^{2})\bh_{k}^{\H} \bh_{k},&\mathrm{if}&k= i\\
				C_{ki}+	\bar{\kappa}\bh_{k}^{\H} \bh_{k},&\mathrm{if}&k\ne i
			\end{array}.
			\right. 
		\end{align}
		
		Thus, the SINR can be written as
		\begin{align}
			\gamma_{k}&=\frac{\frac{p_{k}}{C_{kk}}}{\sum_{i\ne k}{p_{i}}Z_{ki}+{p_{k}}Z_{kk} +1},
		\end{align}
		and the weighted SINR is given by
		\begin{align}
			\frac{	\gamma_{k}}{a_{k}}	&=\frac{{p_{k}}}{(\diag({\hat{\ba}})(\bZ\bp+\one))_{k}}.\label{gamma1}
		\end{align}

		Based on the nonlinear Perron–Frobenius theory and by following the steps in \cite{Cai2011}, we are going to propose an algorithm solving $ 	(\mathcal{P}1) $ with $ 	\frac{	\gamma_{k}}{a_{k}} $ given by \eqref{gamma1}. First, having fixed the active beamforming, we focus on the following problem with respect to the power
		\begin{align}
			\!\!	(\mathcal{P}2.1)~&\max_{ \bp } \min_{k} \frac{{p_{k}}}{(\diag({\hat{\ba}})(\bZ\bp+\one))_{k}}\label{Maximization100} \\
			&~\mathrm{s.t.}~~~~~~~~\bw^{\T}\bp \le P_{\mathrm{max}}, ~p_{k}>0, \forall k.\label{Maximization300} 
		\end{align}
		For its solution, we reformulate \eqref{Maximization100} to a geometric program
		by introducing an auxiliary variable $ \delta. $ After making a logarithmic change of variables, i.e., $ \tilde{\delta}=\log\delta $ and $ \tilde{p}_{k}=\log p_{k} $, we obtain the following equivalent convex problem
		\begin{align}
			\!\!	(\mathcal{P}2.2)~ &~~~~~~~~~-\tilde{\delta}\label{Maximization1000} \\
			&~\mathrm{s.t.}~~~~~\min_{\tilde{\delta},\bp} \frac{e^{\tilde{\delta}}(\diag({\hat{\ba}})(\bZ\bp+\one))_{k}}{(e^{\tilde{p}})_{k}}\le 0\\
			&~~~~~~~~~~\frac{1}{P_{\mathrm{max}}}\bw^{\T}e^{\tilde{\bp}}\le 0, \forall k.\label{Maximization3000} 
		\end{align}
		
		The Lagrangian, corresponding to the previous problem is written as
		\begin{align}
			&\mathcal{L}(\tilde{\delta},\tilde{\bp},\bm \lambda, \mu)=-\tilde{\delta}\\
			&+\sum_{k}\lambda_{k}\log(\!\!\frac{e^{\tilde{\delta}}(\diag({\hat{\ba}})(\bZ\bp+\one))_{k}}{(e^{\tilde{p}})_{k}}\!\!)\!+\!\mu \log (\!\frac{1}{P_{\mathrm{max}}}\bw^{\T}e^{\tilde{\bp}}\!),
		\end{align}
		where $ \lambda_{k}$ and $ \mu  $ describe nonnegative Lagrange dual variables.
		
		As can be seen, the problem $ 	(\mathcal{P}2.2) $ is convex while satisfying Slater’s condition. As a result, the Karush–Kuhn–Tucker (KKT) conditions are necessary and sufficient  for the optimality of $ 	(\mathcal{P}2.2) $. We take into account that at the optimal solution of $ 	(\mathcal{P}2.2) $, we can substitute the   inequality constraints with equality. After changing the variables back into $ \delta $ and $ \bp $, we result in the necessary
		and sufficient conditions for optimality of $ (\mathcal{P}2.1) $ given by
		\begin{align}
			\delta^{*}&=\frac{p_{k}^{*}}{(\diag({\hat{\ba}})(\bZ\bp^{*}+\one))_{k}}, \forall k\\
			\bw^{\T}\bp^{*} &=P_{\mathrm{max}}\\
			\delta^{*}&=\frac{x_{k}^{*}}{(\diag({\hat{\ba}})(\bZ\bxi^{*}+\bw))_{k}}, \forall k\label{al1}\\
			\one^{\T}\bm \lambda^{*} &=1\\
			\bxi^{*}&=\frac{\delta^{*}P_{\mathrm{max}}}{\mu^{*}}(\frac{\lambda_{1}^{*}\hat{a}_{1}}{p^{*}_{1}}, \ldots, \frac{\lambda_{K}^{*}\hat{a}_{K}}{p^{*}_{K}})\\
			&\lambda_{K}^{*}>0, \forall k\label{al2}\\
			&	\mu^{*}>0,\label{al3}
		\end{align}
		where $ 	\delta^{*}, \bp^{*} $, and $\bm\lambda^{*}, 	\mu^{*} $ are the optimal primal and dual variables, respectively. Note that \eqref{al1} was derived from $ \pdv{\mathcal{L}}{p_{k}} $ while \eqref{al2}, \eqref{al3} result  because they should be strictly positive to bound $ \tilde{p}_{k}^{*} $ and $ \mathcal{L}(\tilde{\delta}^{*},\tilde{\bp}^{*}, \bm \lambda^{*},\mu^{*}) $ from below.
		
		The  optimal transmit beamformers under the constraint $ \|\bff_{k}\|=1 $ is obtained by rewritting \eqref{al1} as
		\begin{align}
			\xi_{k}^{*}\bff_{k}^{\H}\bh_{k}\bh_{k}^{\H}\bff_{k}&=a_{k}	\delta^{*}\bff_{k}^{\H}(\sum_{i \ne k}\xi_{i}^{*}(	\kappa_{\mathrm{BS}}^{2}+\kappa_{\mathrm{UE}}^{2}+1)\bh_{i}\bh_{i}^{\H}\nn\\&
			+\xi_{k}^{*}\bar{\kappa}\bh_{k}\bh_{k}^{\H}+w_{k}\Id_{M})\bff_{k}\label{paren}.
		\end{align}
		Given that the term in the parentheses in \eqref{paren} is positive definite for any $ \bxi^{*} $, it is invertible. Hence, based on \cite{Cai2011}, the normalized minimum variance distortionless response (MVDR) beamformers in the dual uplink network have the form \eqref{precoder1}. Also, by eliminating the precoder from \eqref{paren}, we obtain the fixed-point equation given by \eqref{xk}. Let $ \bp^{*} $ be the optimal downlink power, then since a decoupling appears in the downlink network, we result in that $ \bp^{*} $ and $ \delta^{*} $ satisfy the fixed-point
		equation 
		\begin{align}
			\frac{1}{\delta}	\bp^{*}=\frac{\delta^{*}}{\bh_{k}^{\H}\bSigma\bh_{k}},
		\end{align}
		where $ \bSigma= (\sum_{i \ne k}\xi_{i}^{*}			(	\kappa_{\mathrm{BS}}^{2}+\kappa_{\mathrm{UE}}^{2}+1)\bh_{i}\bh_{i}^{\H}+\xi_{k}^{*}\bar{\kappa}\bh_{k}\bh_{k}^{\H}+w_{k}\Id_{M})^{-1}$. Now, we rely on the fact that according to the uplink-downlink duality theory, if the downlink and dual uplink networks have the same SINR values $ \forall k $ and $ \bw^{\T}\bp=\one^{\T}\bq $, we obtain \eqref{dlpower}, where $ \bD_{kk}=	\gamma_{k}/|\bh_{k}^{\H}\bff_{k}|^{2} $, 	and we conclude the proof.
		
		\section{Proof of Theorem~\ref{TheoremDetTau}}\label{Theorem1}
		Herein, we are going to find the DEs of $ \xi_{k}^{*} $ and $ \delta^{*} $. First, we denote $ \tilde{\bh}_{k} =p_{k}^{-1/2}\bh_{k}$. Now, $ 	\xi_{k}^{*} $ becomes
		\begin{align}
			\xi_{k}^{*}=\frac{\delta^{*}}{p_{k}\tilde{\bh}_{k}^{\H}\bSigma\tilde{\bh}_{k}}\label{xk1}.
		\end{align}
		
		A direct application of standard DE tools including Lemma 2 is not correct since both $ d_{k} $ and $ \delta^{*} $ depend on the channel vectors $ \tilde{\bh}_{k} $. Note that $ \bSigma $ is now 
		written as
		$ \bSigma= (\sum_{i \ne k}\frac{\delta^{*}}{d_{i}}			(	\bar{\kappa}+1)\tilde{\bh}_{i}\tilde{\bh}_{i}^{\H}+\frac{\delta^{*}}{d_{k}}\bar{\kappa}\tilde{\bh}_{k}\tilde{\bh}_{k}^{\H}+w_{k}\Id_{M})^{-1} $, where $\bar{\kappa}= \kappa_{\mathrm{BS}}^{2}+\kappa_{\mathrm{UE}}^{2} $. 
		
		To continue, we observe that according to the rank-one perturbation lemma, all $ d_{k}=\tilde{\bh}_{k}^{\H}\bSigma\tilde{\bh}_{k}, \forall k$  converge to the same limit in the large system regime. According to the rank-one perturbation lemma, all $ d_{k}=\tilde{\bh}_{k}^{\H}\bSigma\tilde{\bh}_{k}, \forall k$  converge to the same limit in the large system regime. We define $ e_{k}=\frac{d_{k}}{\tilde{d}} $, and assume that $ e_1<\cdots< e_{K} $,
		where $ \tilde{d} $ is obtained as the unique solution to
		\begin{align}
			\tilde{d} =\tr \bigl(\bR\bigl(\frac{\delta^{*}}{\tilde{d}(1+\delta^{*})}(	\bar{\kappa}+1)K\bR+\bar{\kappa}\bR+\Id_{M}\bigr)^{-1}\bigr)\label{tilded}
		\end{align}
		with $ \bR=\tr(\bR_{\mathrm{RIS}} \bPhi_{w_{k}} \tilde{\bR}_{\mathrm{RIS}}  \bPhi_{w_{k}}^{\H})\bR_{\mathrm{BS}} $. Note that we have assumed that all UEs have the same correlation matrix, i.e., $ \bR_{k}=\bR$, $ \forall k $. Also, $	\tilde{d}  $ has been obtained by assuming that all $  d_{k}$ are replaced by the same value $ 	\tilde{d} $ and by  having applied  Lemma 2 and Lemma 3, in order to satisfy
		\begin{align}
			\max_{k}|	d_{k}/\tilde{d}-1|\to 0.\label{dk}
		\end{align}
		
		Also, we define $ \tilde{\bh}_{k} =\bR^{1/2}\bz_{k}$, where $ \bz_{k}\sim \mathcal{CN}(\b0, \Id_{M}) $. Then, we have
		\begin{align}
			d_{k}=\bz_{k}^{\H}\bR^{1/2}\bSigma\bR^{1/2}\bz_{k},\label{dk10}
		\end{align} where 
		$ \bSigma= (\sum_{i \ne k}\frac{\delta^{*}}{e_{i}\tilde{d}}			(	\bar{\kappa}+1)\bR^{1/2}\bz_{i}\bz_{i}^{\H}\bR^{1/2}+\frac{\delta^{*}}{e_{k}\tilde{d}}\bar{\kappa}\bR^{1/2}\bz_{k}\bz_{k}^{\H}\bR^{1/2}+w_{k}\Id_{M})^{-1} $. We divide both sides by $ \tilde{d} $, and write $ d_{k} $ for $ k=K $. We obtain
		\begin{align}
			&	e_{K}\le \bz_{k}^{\H}\bR^{1/2}\bigl(\sum_{i \ne K}\frac{\delta^{*}}{e_{K}}			(	\bar{\kappa}+1)\bR^{1/2}\bz_{i}\bz_{i}^{\H}\bR^{1/2}\nn\\
			&+\frac{\delta^{*}}{e_{K}}\bar{\kappa}\bR^{1/2}\bz_{K}\bz_{K}^{\H}\bR^{1/2}+\tilde{d}w_{K}\Id_{M}\bigr)^{-1}\bR^{1/2}\bz_{k}.
		\end{align}
		
		The proof of $ \lim \sup e_{K}\le 1 $ will be given by using contradiction. In particular, let $ l>0 $ such that $ \lim \sup e_{K}> 1+l $. Then, we can write
		\begin{align}
			&e_{K}\le \bz_{k}^{\H}\bR^{1/2}\bigl(\sum_{i \ne K}\frac{\delta^{*}}{e_{K}}			(	\bar{\kappa}+1)\bR^{1/2}\bz_{i}\bz_{i}^{\H}\bR^{1/2}\nn\\
			&+\frac{\delta^{*}}{e_{K}}\bar{\kappa}\bR^{1/2}\bz_{K}\bz_{K}^{\H}\bR^{1/2}+\tilde{d}(1+l)w_{K}\Id_{M}\bigr)^{-1}\bR^{1/2}\bz_{k},
		\end{align}
		and check that $ \frac{\tilde{d}(1+l) }{\delta^{*}}$ remains almost surely in a bounded interval since we note from \eqref{tilded} that $ \tilde{d}\le \frac{MR }{K} $, where $ R=\lim \sup\|\bR\| $. Hence, the application of Lemmas 2 and 3 results in
		\begin{align}
			&\bz_{k}^{\H}\bR^{1/2}\bigl(\sum_{i \ne K}\frac{\delta^{*}}{e_{K}}			(	\bar{\kappa}+1)\bR^{1/2}\bz_{i}\bz_{i}^{\H}\bR^{1/2}\nn\\
			&+\frac{\delta^{*}}{e_{K}}\bar{\kappa}\bR^{1/2}\bz_{i}\bz_{i}^{\H}\bR^{1/2}+\tilde{d}w_{K}\Id_{M}\bigr)^{-1}\bR^{1/2}\bz_{k}-\mu \xrightarrow[ n \rightarrow \infty]{\mbox{a.s.}}0, 
		\end{align}
		where $ \mu $ can be written as
		\begin{align}
			1=\tr\bigl(\bR\bigl(\mu\frac{\delta^{*}}{1\!+\!\mu \delta^{*}}(	\bar{\kappa}\!+\!1)K\bR\!+\!\bar{\kappa}\bR\!+\!\mu\tilde{d} (1\!+\!l)w_{k}\Id_{M}\bigr)^{-1}\bigr),\label{mu1}
		\end{align}
		with the right hand being a decreasing function of $ \mu $. Based on the definition of $ \tilde{d} $, we can write
		\begin{align}
			1=\tr\bigl(\bR\bigl(\frac{\delta^{*}}{1+ \delta^{*}}(	\bar{\kappa}+1)K\bR+\bar{\kappa}\bR+\mu\tilde{d}w_{k}\Id_{M} \bigr)^{-1}\bigr).\label{mu2}
		\end{align}
		
		Since \eqref{mu1} and  \eqref{mu2} are equal to $ 1 $, and because $ (1+l)>1 $
		and the right hand of \eqref{mu1} is decreasing with $ \mu $, it is  required that $ \mu<1 $. However, a contradiction appears when the size of the system increases because $ \mu \ge 1 $ from \eqref{mu1}. As a result, we have shown that $ \lim \sup e_{K}\le 1  $. Similarly, we can show that $  \lim \inf e_{1}\ge 1 $. Hence, we have proved \eqref{dk}.
		
		If we use \eqref{dk} in the definition of $ \delta^{*} $, we obtain
		\begin{align}
			\delta^{*}=\frac{P_{\mathrm{max}}}{\sum_{k=1}^{K}\frac{1}{\tilde{d}p_{k}}}+o(1). \end{align}
		If we replace $ \tilde{d} $ by $ \frac{\delta^{*}\sum_{k=1}^{K}\frac{1}{p_{k}}}{P_{\mathrm{max}}} $,
		and use  \eqref{tilded}, we obtain
		\begin{align}
			\delta^{*} =\tr(\bR(\frac{1}{1+ \delta^{*}}(	\bar{\kappa}+1)K\bR+\bar{\kappa}\bR+\eta w_{k}\Id_{M} )^{-1})+o(1),
		\end{align}
		where $ \eta=\frac{1}{P_{\mathrm{max}}} \sum_{k=1}^{K}\frac{1}{p_{k}}$. By discarding the vanishing terms,  the DE of $ \delta^{*} $ is given by $ \bar{\delta} $, which is the unique solution to
		\begin{align}
			\bar{\delta}=\tr\bigl(\bR\bigl(\frac{1}{1+ \delta^{*}}(	\bar{\kappa}+1)K\bR+\bar{\kappa}\bR+\eta w_{k}\Id_{M} \bigr)^{-1}\bigr).
		\end{align}
		Having shown the convergence of $\delta^{*}  $ and 	$ d_{k} $ together in $ \eqref{dk10} $, the convergence of $ \xi_{k}, \forall k $ is shown, and the proof is concluded.
		
		\section{Proof of Proposition~\ref{PropositionGradients}}\label{prop2}
		First, we obtain $\nabla_{\boldsymbol{\theta^{t}}}\bar{\delta}(\thetv,\betv) $, which is the complex gradient of the SINR with respect to $ \boldsymbol{\theta}^{t\ast}$. We observe from \eqref{cov1} that $\nabla_{\thetv^{t}}\bar{\delta}(\thetv,\betv) = 0$ if $w_k=r$, i.e., in the case that UE $k$ is in the reflection region. Hence, we focus on the derivation of $\nabla_{\thetv^{t}}\bar{\delta}(\thetv,\betv)$ when $w_k=t$. For the sake of exposition, we  write $\bR$ that is going to be used below as
		\begin{align}
			\bR&=\hat{\beta}_{k}\tr(\bR_{\mathrm{RIS}} \bPhi_{t} \tilde{\bR}_{\mathrm{RIS}}  \bPhi_{t}^{\H})\bR_{\mathrm{BS}}=\hat{\beta}_{k}\tr(\bA_{t}\bPhi_{t}^{\H}  )\bR_{\mathrm{BS}}\label{cov1t}.
		\end{align}
		where  $\mathbf{A}_{t}=\mathbf{R}_{\mathrm{RIS}}\bPhi_{t}\tilde{\mathbf{R}}_{\mathrm{RIS}}$. Similarly, if  $w_k=r$, we define $\mathbf{A}_{r}=\mathbf{R}_{\mathrm{RIS}}\bPhi_{r}\tilde{\mathbf{R}}_{\mathrm{RIS}}$.

		Next, we   define the implicit function  
		\begin{align}
			g(\bar{\delta}, \bPhi_{t})=\bar{\delta}-	\tr\bR_{k} \bT,
		\end{align}
		where $ \bT=\bigl(\frac{1}{1+\bar{\delta}}	(	\bar{\kappa}+1)K\bR+\bar{\kappa}\bR+\eta  \Id_{M}\bigr)^{-1} $.
		We exploit the implicit function theorem that indicates
		\begin{align}
			\nabla_{\thetv^{t}}\bar{\delta}=-\nabla_{\thetv^{t}}g/\nabla_{\bar{\delta}}g,
		\end{align}
		where $ \nabla_{\bar{\delta}}g=1- \tr\bR \bT \bigl(\frac{1}{(1+\bar{\delta})^{2}}		(	\bar{\kappa}+1)K\bR+\bar{\kappa}\bR\bigr)   \bT$. Note that we have used the property concerning the derivative of an inverse matrix.

		For the calculation of $\nabla_{\thetv^{t}}g$, we follow similar steps to  \cite[Chap. 3]{hjorungnes:2011}. The differential $d(g)$, where $d(\cdot)$ is  obtained as
		\begin{align}
			d(g)	=1-\tr(d(\bR)\bT+\bR  d(\bT))\label{dg}.
		\end{align}
		
		The differential $d(\mathbf{R})$  is derived by noting that 
		\begin{align}
			d(\mathbf{R})\label{dRk:general}  
			& =\hat{\beta}\mathbf{R}_{\mathrm{BS}}\tr\bigl(\mathbf{A}_{t}\herm d(\bPhi_{t})+\mathbf{A}_{t}d\bigl(\bPhi_{t}\herm\bigr)\bigr).
		\end{align}
		
		Given that $\bPhi_{t}$ is diagonal,   $d(\mathbf{R})$ can be written as
		\begin{align}
			d(\mathbf{R})&=\hat{\beta}\mathbf{R}_{\mathrm{BS}}\bigl(\bigl(\diag\bigl(\mathbf{A}_{t}\herm\diag(\boldsymbol{{\beta}}^{t})\bigr)\bigr)\trans d(\boldsymbol{\theta}^{t})\nn\\
			&+\bigl(\diag\bigl(\mathbf{A}_{t}\diag(\boldsymbol{{\beta}}^{t})\bigr)\bigr)\trans d(\boldsymbol{\theta}^{t\ast})\bigr).\label{eq:dRk}
		\end{align}
		
		Regarding the differential of $ \bT $, we have
		\begin{align}
			d(\bT)=\frac{1}{1+\bar{\delta}}\bT 	((	\bar{\kappa}+1)K+\bar{\kappa})d(\bR)\bT\label{dt}.
		\end{align}
		Substitution of \eqref{eq:dRk} and \eqref{dt} in \eqref{dg} results in
		\begin{align}
			\nabla_{\thetv^{t}}g&=\frac{\partial}{\partial{\thetv^{t\ast}}}g\nn\\
			&=-\hat{\beta}\tr\bigl(\mathbf{R}_{\mathrm{BS}}\bT+\frac{1}{1+\bar{\delta}}((	\bar{\kappa}+1)K+\bar{\kappa})\bR\bT \mathbf{R}_{\mathrm{BS}}\bT\bigr)
			\nn\\
			&\times\diag\bigl(\mathbf{A}_{t}\herm\diag(\boldsymbol{{\beta}}^{t})\bigr),
		\end{align}
		which gives \eqref{derivtheta_t}. Proof of \eqref{derivtheta_r} can be easily  obtained by following the same steps. We omit the details for brevity.
		
		Regarding the derivation of   $\nabla_{\boldsymbol{\beta}^{t}}\bar{\delta}$, we follow the procedure below. In particular, we start with the derivation of  $\nabla_{\boldsymbol{\beta}^{t}}\bar{\delta}$ when $w_{k}=t$. In  this case, from \eqref{dRk:general}, we obtain
		\begin{subequations}\label{dRk_beta_t}
			\begin{align}
				d(\mathbf{R}) & =\hat{\beta}\mathbf{R}_{\mathrm{BS}}\tr\bigl(\mathbf{A}_{t}\herm d(\bPhi_{t})+\mathbf{A}_{t}d\bigl(\bPhi_{t}\herm\bigr)\bigr)\\
				& =\hat{\beta}\mathbf{R}_{\mathrm{BS}}\bigl(\diag\bigl(\mathbf{A}_{t}\herm\diag(\btheta^{t})\bigr)^{\T}d(\boldsymbol{\beta}^{t})\nn\\
				&+\diag\bigl(\mathbf{A}_{t}\diag(\btheta^{t\ast})\bigr)^{\T}d(\boldsymbol{\beta}^{t})\bigr)\\
				& =2\hat{\beta}\mathbf{R}_{\mathrm{BS}}\Re\bigl\{\diag\bigl(\mathbf{A}_{t}\herm\diag(\btheta^{t})\bigr\}^{\T} d(\boldsymbol{\beta}^{t}).
			\end{align}
		\end{subequations}
		By inserting \eqref{dt}  and \eqref{dRk_beta_t} in \eqref{dg}, we obtain
		\begin{align}
			\nabla_{\boldsymbol{\beta}^{t}}\bar{\delta} 
			&=-2\hat{\beta}\tr\bigl(\mathbf{R}_{\mathrm{BS}}\bT+\frac{1}{1+\bar{\delta}}((	\bar{\kappa}+1)K+\bar{\kappa})\bR_{k}\bT \mathbf{R}_{\mathrm{BS}}\bT\bigr)\nn\\
			&\times\Re\bigl\{\diag\bigl(\mathbf{A}_{t}\herm\diag(\btheta^{t})\bigr\}.
		\end{align}
		Similarly, we can write  $	\nabla_{\boldsymbol{\beta}^{r}}\bar{\delta} $ as 
		\begin{align}
			\nabla_{\boldsymbol{\beta}^{r}}\bar{\delta} 
			&=-2\hat{\beta}\tr\bigl(\mathbf{R}_{\mathrm{BS}}\bT+\frac{1}{1+\bar{\delta}}((	\bar{\kappa}+1)K+\bar{\kappa})\bR_{k}\bT \mathbf{R}_{\mathrm{BS}}\bT\bigr)\nn\\
			&\times\Re\bigl\{\diag\bigl(\mathbf{A}_{r}\herm\diag(\btheta^{r})\bigr\}.
		\end{align}
		Note that $\nabla_{\betv}\bar{\delta}(\thetv,\betv)$ is real-valued.
		
	\end{appendices} 
	
	\bibliographystyle{IEEEtran}
	\bibliography{IEEEabrv,mybib}
	
\end{document}